\DeclareSymbolFont{extraup}{U}{zavm}{m}{n}
\DeclareMathSymbol{\varheart}{\mathalpha}{extraup}{86}
\DeclareMathSymbol{\vardiamond}{\mathalpha}{extraup}{87}
\DeclareMathSymbol{\vardiamond}{\mathalpha}{extraup}{87}
\newcommand{\commment}[1]{}
\renewcommand{\phi}{\varphi}
\renewcommand{\emptyset}{\varnothing}
\renewcommand{\epsilon}{\varepsilon}
\newcommand{\nomi}{\mathbf{i}}
\newcommand{\nomj}{\mathbf{j}}
\newcommand{\nomk}{\mathbf{k}}
\theoremstyle{plain}
\newtheorem{thm}{Theorem}
\newtheorem{theorem}{Theorem}[section]
\newtheorem{corollary}[theorem]{Corollary}
\newtheorem{example}[theorem]{Example}
\newtheorem{proposition}[thm]{Proposition}
\theoremstyle{definition}
\newtheorem{definition}[thm]{Definition}
\title{Sahlqvist Correspondence Theory for Second-Order Propositional Modal Logic}
\author{Zhiguang Zhao}
\date{}
\begin{document}
\maketitle
\begin{abstract}Modal logic with propositional quantifiers (i.e.\ second-order propositional modal logic ($\mathsf{SOPML}$)) has been considered since the early time of modal logic. Its expressive power and complexity are high, and its van-Benthem-Rosen theorem and Goldblatt-Thomason theorem have been proved by ten Cate (2006). However, the Sahlqvist theory of $\mathsf{SOPML}$ has not been considered in the literature. In the present paper, we fill in this gap. We develop the Sahlqvist correspondence theory for $\mathsf{SOPML}$, which covers and properly extends existing Sahlqvist formulas in basic modal logic. We define the class of Sahlqvist formulas for $\mathsf{SOMPL}$ step by step in a hierarchical way, each formula of which is shown to have a first-order correspondent over Kripke frames effectively computable by an algorithm $\mathsf{ALBA}^{\mathsf{SOMPL}}$. In addition, we show that certain $\Pi_2$-rules correspond to $\Pi_2$-Sahlqvist formulas in $\mathsf{SOMPL}$, which further correspond to first-order conditions, and that even for very simple $\mathsf{SOMPL}$ Sahlqvist formulas, they could already be non-canonical.\\

Keywords: correspondence theory, second-order propositional modal logic, ALBA algorithm, $\Pi_2$-rules, canonicity
\end{abstract}

\section{Introduction}

\paragraph{Second-Order Propositional Modal Logic ($\mathsf{SOMPL}$).}Modal logic with propositional quantifiers has been considered in the literature since Kripke \cite{Kr59}, Bull \cite{Bu69}, Fine \cite{Fi69,Fi70}, and Kaplan \cite{Ka70}.\footnote{For more literature, see \cite{AnTh02,BevdH15,BevdH16,BevDvdH16,BeGuMa97,Di18,Fr20,GhZa95,Ho19,HoLi18,Kr97,Ku04,Ku08,Ku15}.} This language is of high complexity: its satisfiability problem is not decidable, and indeed not even analytical. In Kaminski and Tiomkin \cite{KaTi96}, the authors showed that the expressive power for $\mathsf{SOMPL}$ whose modalities are S4.2 or weaker is the same as second-order predicate logic. However, not every second-order formula is equivalent to an $\mathsf{SOMPL}$-formula, \mbox{since} $\mathsf{SOMPL}$-formulas are preserved under generated submodels (see van Benthem \cite{vB83}). In ten Cate \cite{tC06}, the author proved the analogues of the van Benthem-Rosen theorem (on the model level) and Goldblatt-Thomason theorem (on the frame level) for $\mathsf{SOMPL}$. Therefore, a natural question is: on the frame level, can we find a natural fragment of $\mathsf{SOPML}$-formulas such that each formula in this fragment corresponds to a first-order formula, in the sense of Sahlqvist theory (see \cite{Sa75,vB83})? This is what we will answer in the paper.

\paragraph{Correspondence Theory.}Typically, modal correspondence theory \cite{vB83} \mbox{concerns} the correspondence of modal formulas and first-order formulas over Kripke frames, via the tools of standard translation. Syntactic classes (e.g.\ Sahlqvist formulas \cite{Sa75}, inductive formulas \cite{GorankoV06}, etc.) of modal formulas are identified to have first-order correspondents and are canonical, i.e.\ their validity are closed under taking canonical extensions. 

In the present paper, we identify the Sahlqvist formulas of $\mathsf{SOMPL}$, which cover and properly extend the Sahlqvist fragment in basic modal logic. We show that each Sahlqvist $\mathsf{SOMPL}$ formula corresponds to a first-order formula by an algorithm $\mathsf{ALBA}^{\mathsf{SOPML}}$. In particular, we have the following observations: the $\mathsf{SOMPL}$ Sahlqvist formula $\forall p(\Box p\land\forall q(q\to \Diamond\Diamond q\lor p)\to p)$ corresponds to $\forall x\forall y(Rxy\land Ryx\to Rxx)$, which is not modally definable since this property is not preserved under taking bounded morphic image (see Example \ref{Exa:Non-Sahlqvist}); the $\mathsf{SOMPL}$ Sahlqvist formula $\forall q(\forall p(p\to\Diamond p\lor q)\to q)$ is not canonical (see Example \ref{Example:irreflexivity}), which is in contrast to the basic modal logic setting where each Sahlqvist formula is canonical.

\paragraph{Non-standard Rules.}Another topic that is related to the present paper is non-standard rules, starting from Gabbay \cite{Ga81} where a non-standard rule for irreflexivity is introduced. These rules have been used in temporal logic \cite{Bu80,GaHo90}, region-based theories of space \cite{BaTiVa07,Va07} and are used to prove completeness results for modal logic systems with non-$\xi$-rules \cite{Ve93}. In particular, the so-called $\Pi_2$-rules \cite{BeBeSaVe19,BeGhLa20,Sa16} which generalize both the irreflexivity rule of Gabbay \cite{Ga81} and the non-$\xi$-rules of Venema \cite{Ve93}, have their natural $\forall\exists$-counterparts, which are essentially $\forall\exists$-$\mathsf{SOMPL}$ formulas, fit naturally into the language of $\mathsf{SOMPL}$. We use the correspondence algorithm to compute the first-order correspondents of a subclass of $\Pi_2$-rules whose $\forall\exists$-counterparts are $\mathsf{SOMPL}$ $\Pi_2$-Sahlqvist formulas.

\paragraph{Our methodology.}The present paper use the same methodology as \cite{ConPalSou,CoGhPa14}. In the present paper, inspired by the Sahlqvist rules in Santoli \cite{Sa16}, we identify the Sahlqvist formulas of $\mathsf{SOMPL}$, which are generalizations of Sahlqvist formulas in basic modal logic and have first-order correspondents. The Sahlqvist fragment of $\mathsf{SOPML}$ is defined in a step-by-step way, and we give an algorithm $\mathsf{ALBA}^{\mathsf{SOPML}}$ (Ackermann Lemma Based Algorithm) which can successfully reduce Sahlqvist formulas in $\mathsf{SOPML}$ to first-order formulas and is sound with respect to Kripke semantics.

\paragraph{Structure of the paper.}The structure of the paper is as follows: Section \ref{Sec:Prelim} gives the necessary preliminaries. Section \ref{Sec:Sahl} gives the definition of Sahlqvist $\mathsf{SOPML}$ formulas step by step. Section \ref{Sec:ALBA} defines the algorithm $\mathsf{ALBA}^{\mathsf{SOPML}}$. Section \ref{Sec:Soundness} shows the soundness of the algorithm with respect to Kripke frames. Section \ref{Sec:Success} shows that the algorithm succeeds on all Sahlqvist $\mathsf{SOPML}$ formulas. Section \ref{Sec:Example} gives some examples and connect them with non-standard rules, and one example shows that even for very simple Sahlqvist $\mathsf{SOPML}$ formulas, they can already be non-canonical. Section \ref{Sec:Conclusion} gives some final remarks and conclusion.

\section{Preliminaries}\label{Sec:Prelim}

\subsection{Language and semantics}

In the present paper we consider the unimodal language. Given a set $\mathsf{Prop}$ of propositional variables, the second-order propositional modal formulas are defined as follows:
$$\phi::=p\mid \bot\mid \top\mid \neg\phi\mid \phi\land\phi\mid \phi\lor\phi\mid \phi\to\phi\mid\Box\phi\mid\Diamond\phi\mid \forall p\phi\mid \exists p\phi$$

where $p\in\mathsf{Prop}$. We use the notation $\vec p$ to denote a set of propositional variables and use $\phi(\vec p)$ to indicate that the propositional variables occur in $\phi$ are all in $\vec p$. We say that an occurrence of a propositional variable $p$ in a formula $\phi$ is \emph{positive} (resp.\ \emph{negative}) if it is in the scope of an even (resp.\ odd) number of negations (here $\alpha\to\beta$ is regarded as $\neg\alpha\lor\beta$).

The semantics of the second-order propositional modal formulas are defined as follows:

\begin{definition}

A \emph{Kripke frame} is a pair $\mathbb{F}=(W,R)$ where $W\neq\emptyset$ is the \emph{domain} of $\mathbb{F}$, the \emph{accessibility relation} $R$ is a binary relation on $W$. A \emph{Kripke model} is a pair $\mathbb{M}=(\mathbb{F}, V)$ where $V:\mathsf{Prop}\to P(W)$ is a \emph{valuation} on $\mathbb{F}$. $V^{p}_{X}$ denote a valuation which is the same as $V$ except that $V^{p}_{X}(p)=X\subseteq W$.

Now the satisfaction relation can be defined as follows: given any Kripke model $\mathbb{M}=(W,R,V)$, any $w\in W$, the basic and Boolean cases are standard, and for modalities and propositional quantifiers, 

\begin{center}
\begin{tabular}{l c l}
$\mathbb{M},w\Vdash\Box\varphi$ & iff & for any $v$ such that $Rwv$, $\mathbb{M},v\Vdash\phi$;\\
$\mathbb{M},w\Vdash\Diamond\varphi$ & iff & there exists $v$ such that $Rwv\mbox{ and }\mathbb{M},v\Vdash\varphi$;\\
$\mathbb{M},w\Vdash\forall p\varphi$ & iff & for all $X\subseteq W$, $(W,R,V^{p}_{X}),w\Vdash\varphi$;\\
$\mathbb{M},w\Vdash\exists p\varphi$ & iff & there exists $X\subseteq W$ such that $(W,R,V^{p}_{X}),w\Vdash\varphi$.\\
\end{tabular}
\end{center}
\end{definition}

In order to use the algorithm to compute the first-order correspondents of Sahlqvist $\mathsf{SOPML}$ formulas, we will need the following \emph{expanded modal language} which is defined as follows\footnote{Notice that by adding the universal modality $\mathsf{A}$ into the language, all of the additional connectives in the expanded modal language can be defined in the language with $\mathsf{A}$. For example, $\mathsf{l}(\phi,\psi)$ can be rewritten as $\mathsf{A}(\phi\to\psi)$, and the backward-looking modality $\Diamondblack$ can be defined by $\Diamondblack\phi\leftrightarrow \exists p(p\land\forall q(q\to\mathsf{A}(p\to q)))\land \mathsf{E}(\phi\land\Diamond p)$ where $\mathsf{E}$ is $\neg\mathsf{A}\neg$. The expanded modal language is introduced for the convenience of the algorithm, as what is typically done in algorithmic correspondence theory.}:
$$\phi::=p\mid \nomi\mid \bot\mid \top\mid \neg\phi\mid \phi\land\phi\mid \phi\lor\phi\mid \phi\to\phi\mid$$
$$\Box\phi\mid\Diamond\phi\mid \blacksquare\phi\mid\Diamondblack\phi\mid \forall p\phi\mid \exists p\phi\mid \forall \nomi\phi\mid \exists \nomi\phi\mid \mathbf{l}(\phi,\phi)$$

where $p\in\mathsf{Prop}$, $\nomi\in\mathsf{Nom}$ is a \emph{nominal}, $\blacksquare$ and $\Diamondblack$ are the backward-looking box and diamond respectively, $\forall\nomi$ and $\exists\nomi$ are \emph{nominal quantifiers}, and $\mathbf{l}$ is a binary modality. We call a formula \emph{pure} if it does not contain propositional variables or propositional quantifiers (it can contain nominals, nominal quantifiers and the binary modality $\mathbf{l}$).

The interpretation of the expanded modal language is given as follows: For a valuation $V$, it is defined as $V:\mathsf{Prop}\cup\mathsf{Nom}\to P(W)$ such that $V(\nomi)$ is a singleton for all $\nomi\in\mathsf{Nom}$. The additional satisfaction clauses are given as follows (here $V^{\nomi}_{v}$ denote a valuation which is the same as $V$ except that $V^{\nomi}_{v}(\nomi)=\{v\}\subseteq W$.):

\begin{center}
\begin{tabular}{l c l}
$\mathbb{M},w\Vdash\nomi$ & iff & $V(\nomi)=\{w\}$;\\
$\mathbb{M},w\Vdash\blacksquare\varphi$ & iff & for any $v$ such that $Rvw$, $\mathbb{M},v\Vdash\phi$;\\
$\mathbb{M},w\Vdash\Diamondblack\varphi$ & iff & there exists $v$ such that $Rvw\mbox{ and }\mathbb{M},v\Vdash\varphi$;\\
$\mathbb{M},w\Vdash\forall \nomi\varphi$ & iff & for all $v\in W$, $(W,R,V^{\nomi}_{v}),w\Vdash\varphi$;\\
$\mathbb{M},w\Vdash\exists \nomi\varphi$ & iff & there exists $v\in W$ such that $(W,R,V^{\nomi}_{v}),w\Vdash\varphi$;\\
$\mathbb{M},w\Vdash\mathbf{l}(\phi,\psi)$ & iff & for all $v\in W$ (if $\mathbb{M},v\Vdash\phi$, then $\mathbb{M},v\Vdash\psi$).\\
\end{tabular}
\end{center}

We can extend $V$ to a map from the set of formulas to $P(W)$ in the natural way.

\subsection{Inequalities and complex inequalities}

We will find it convenient to use the inequality notation $\phi\leq\psi$ where $\phi$ and $\psi$ are formulas. We use $\mathsf{Ineq}$ to denote the set of all inequalities in the expanded modal language. We define \emph{complex inequalities} as follows:

$$\mathsf{Comp}::=\mathsf{Ineq}\mid\mathsf{Comp}\ \&\ \mathsf{Comp}\mid\mathsf{Comp}\ \Rightarrow\ \mathsf{Comp}\mid$$
$$\forall p\mathsf{Comp}\mid\exists p\mathsf{Comp}\mid\forall\nomi\mathsf{Comp}\mid\exists\nomi\mathsf{Comp}$$

Here we assume that the quantifiers have a higher precedence than $\&$, and $\&$ is higher than $\Rightarrow$.

Complex inequalities are interpreted in models $\mathbb{M}=(W,R,V)$ instead of pointed models $(\mathbb{M},w)$. The semantics of complex inequalities is defined as follows:

\begin{itemize}
\item An inequality is interpreted as follows:
$$(W,R,V)\Vdash\phi\leq\psi\mbox{ iff }$$$$(\mbox{for all }w\in W, \mbox{ if }(W,R,V),w\Vdash\phi, \mbox{ then }(W,R,V),w\Vdash\psi);$$

\item $(W,R,V)\Vdash \mathsf{Comp_{1}}\&\mathsf{Comp_{2}}$ iff $(W,R,V)\Vdash \mathsf{Comp_{1}}$ and $(W,R,V)\Vdash \mathsf{Comp_{2}}$;

\item $(W,R,V)\Vdash \mathsf{Comp_{1}}\Rightarrow\mathsf{Comp_{2}}$ iff ($(W,R,V)\Vdash \mathsf{Comp_{1}}$ implies $(W,R,V)\Vdash \mathsf{Comp_{2}}$);

\item $(W,R,V)\Vdash\forall p\mathsf{Comp}$ iff for all $X\subseteq W$, $(W,R,V^{p}_{X})\Vdash\mathsf{Comp}$;

\item $(W,R,V)\Vdash\exists p\mathsf{Comp}$ iff there exists an $X\subseteq W$ such that $(W,R,V^{p}_{X})\Vdash\mathsf{Comp}$;

\item $(W,R,V)\Vdash\forall\nomi\mathsf{Comp}$ iff for all $v\in W$, $(W,R,V^{\nomi}_{v})\Vdash\mathsf{Comp}$;

\item $(W,R,V)\Vdash\exists\nomi\mathsf{Comp}$ iff there exists an $v\in W$ such that $(W,R,V^{\nomi}_{v})\Vdash\mathsf{Comp}$.
\end{itemize}

\subsection{Standard translation}

In the correspondence language which is second-order due to the existence of propositional quantifiers in $\mathsf{SOPML}$, we have a binary predicate symbol $R$ corresponding to the binary relation, a set of constant symbols $i$ corresponding to each nominal $\nomi$, a set of unary predicate symbols $P$ corresponding to each propositional variable $p$.

\begin{definition}
The standard translation of the expanded $\mathsf{SOPML}$ language is defined as follows:

\begin{itemize}
\item $ST_{x}(p):=Px$;
\item $ST_{x}(\nomi):=x=i$;
\item $ST_{x}(\bot):=\bot$;
\item $ST_{x}(\top):=\top$;
\item $ST_{x}(\neg\phi):=\neg ST_{x}(\phi)$;
\item $ST_{x}(\phi\land\psi):=ST_{x}(\phi)\land ST_{x}(\psi)$;
\item $ST_{x}(\phi\lor\psi):=ST_{x}(\phi)\lor ST_{x}(\psi)$;
\item $ST_{x}(\phi\to\psi):=ST_{x}(\phi)\to ST_{x}(\psi)$;
\item $ST_{x}(\Box\phi):=\forall y(Rxy\to ST_{y}(\phi))$;
\item $ST_{x}(\Diamond\phi):=\exists y(Rxy\land ST_{y}(\phi))$;
\item $ST_{x}(\blacksquare\phi):=\forall y(Ryx\to ST_{y}(\phi))$;
\item $ST_{x}(\Diamondblack\phi):=\exists y(Ryx\land ST_{y}(\phi))$;
\item $ST_{x}(\forall p\phi):=\forall P ST_{x}(\phi)$;
\item $ST_{x}(\exists p\phi):=\exists P ST_{x}(\phi)$;
\item $ST_{x}(\forall \nomi\phi):=\forall i ST_{x}(\phi)$;
\item $ST_{x}(\exists \nomi\phi):=\exists i ST_{x}(\phi)$;
\item $ST_{x}(\mathbf{l}(\phi,\psi)):=\forall y(ST_{y}(\phi)\to ST_{y}(\psi))$.
\end{itemize}
\end{definition}

The following proposition states that this translation is correct:

\begin{proposition}
For any Kripke model $\mathbb{M}$, any $w\in W$ and any expanded $\mathsf{SOPML}$ formula $\phi$, 
$$\mathbb{M},w\Vdash\phi\mbox{ iff }\mathbb{M}\vDash ST_{x}(\phi)[x:=w].$$
\end{proposition}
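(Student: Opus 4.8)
The plan is to prove this by a routine induction on the structure of the expanded $\mathsf{SOPML}$ formula $\phi$, mirroring the definition of $ST_x$ clause by clause. The base cases are immediate: for $\phi = p$ we have $\mathbb{M},w \Vdash p$ iff $w \in V(p)$ iff $\mathbb{M} \vDash Px[x:=w]$; for $\phi = \nomi$, $\mathbb{M},w \Vdash \nomi$ iff $V(\nomi) = \{w\}$ iff the interpretation of the constant $i$ equals $w$ iff $\mathbb{M} \vDash (x = i)[x:=w]$; and $\bot, \top$ are trivial. The Boolean cases ($\neg, \land, \lor, \to$) follow directly from the inductive hypothesis and the fact that $ST_x$ commutes with the Boolean connectives, together with the standard semantic clauses for these connectives in both the modal and first-order settings.

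For the modal cases I would argue as follows. For $\phi = \Box\psi$: $\mathbb{M},w \Vdash \Box\psi$ iff for all $v$ with $Rwv$ we have $\mathbb{M},v \Vdash \psi$, which by the inductive hypothesis (applied with the variable $y$ in place of $x$ and $v$ in place of $w$) is equivalent to: for all $v$, $Rwv$ implies $\mathbb{M} \vDash ST_y(\psi)[y:=v]$, i.e.\ $\mathbb{M} \vDash \forall y(Rxy \to ST_y(\psi))[x:=w]$, which is $ST_x(\Box\psi)[x:=w]$. The cases $\Diamond, \blacksquare, \Diamondblack$ are entirely analogous, swapping the quantifier/connective pair and, for the backward-looking modalities, the order of the arguments of $R$. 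For the binary modality $\mathbf{l}(\psi,\chi)$: $\mathbb{M},w \Vdash \mathbf{l}(\psi,\chi)$ iff for all $v$, $\mathbb{M},v \Vdash \psi$ implies $\mathbb{M},v \Vdash \chi$; applying the inductive hypothesis to both $\psi$ and $\chi$ gives the equivalence with $\mathbb{M} \vDash \forall y(ST_y(\psi) \to ST_y(\chi))[x:=w]$ (note the free variable $x$ does not actually occur, matching the fact that $\mathbf{l}(\psi,\chi)$ is interpreted globally).

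The quantifier cases require a little care, but no genuine difficulty. For $\phi = \forall p\,\psi$: $\mathbb{M},w \Vdash \forall p\,\psi$ iff for all $X \subseteq W$, $(W,R,V^p_X),w \Vdash \psi$. The key point is that the valuation $V^p_X$ on the modal side corresponds exactly, under the translation, to reinterpreting the unary predicate symbol $P$ as the subset $X$ of the domain; so by the inductive hypothesis applied to the model $(W,R,V^p_X)$ we get that this holds iff for all interpretations of $P$, $\mathbb{M} \vDash ST_x(\psi)[x:=w]$, i.e.\ $\mathbb{M} \vDash \forall P\, ST_x(\psi)[x:=w] = ST_x(\forall p\,\psi)[x:=w]$. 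The cases $\exists p$, $\forall\nomi$, $\exists\nomi$ are the same, with $\exists P$, $\forall i$, $\exists i$ respectively, and with the observation that $V^{\nomi}_v$ corresponds to reinterpreting the constant symbol $i$ as the element $v$. The only mild subtlety worth a sentence in the write-up is making the correspondence between modal valuations and first-order interpretations precise, so that the inductive hypothesis can legitimately be invoked on the modified models; but this is a bookkeeping matter rather than a conceptual obstacle, and indeed for the bulk of this proposition I would simply remark that it is a straightforward induction entirely parallel to the classical case (van Benthem \cite{vB83}), since nothing about the second-order quantifiers changes the structure of the argument.
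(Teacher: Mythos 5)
Your proof is correct and is exactly the standard induction the paper implicitly relies on (the paper states this proposition without proof, treating it as routine); the cases you flag as needing care --- the propositional and nominal quantifiers, where $V^p_X$ and $V^{\nomi}_v$ must be matched with reinterpretations of $P$ and $i$ --- are indeed the only points that go beyond the classical argument, and you handle them correctly.
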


For inequalities and complex inequalities, the standard translation is given in a global way:

\begin{definition}
\begin{itemize}
\item $ST(\phi\leq\psi):=\forall x(ST_{x}(\phi)\to ST_{x}(\psi))$;
\item $ST(\mathsf{Comp}_1\ \&\ \mathsf{Comp}_2)=ST(\mathsf{Comp}_1)\land ST(\mathsf{Comp}_2)$;
\item $ST(\mathsf{Comp}_1\ \Rightarrow\ \mathsf{Comp}_2)=ST(\mathsf{Comp}_1)\to ST(\mathsf{Comp}_2)$;
\item $ST(\forall p(\mathsf{Comp})):=\forall P(ST(\mathsf{Comp}))$;
\item $ST(\exists p(\mathsf{Comp})):=\exists P(ST(\mathsf{Comp}))$;
\item $ST(\forall \nomi(\mathsf{Comp})):=\forall i(ST(\mathsf{Comp}))$;
\item $ST(\exists \nomi(\mathsf{Comp})):=\exists i(ST(\mathsf{Comp}))$.
\end{itemize}
\end{definition}

\begin{proposition}\label{Prop:ST:ineq:quasi:mega}
For any Kripke model $\mathbb{M}$, any inequality $\mathsf{Ineq}$, any complex inequality $\mathsf{Comp}$,

$$\mathbb{M}\Vdash\mathsf{Ineq}\mbox{ iff }\mathbb{M}\vDash ST(\mathsf{Ineq});$$
$$\mathbb{M}\Vdash\mathsf{Comp}\mbox{ iff }\mathbb{M}\vDash ST(\mathsf{Comp}).$$
\end{proposition}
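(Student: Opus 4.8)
The plan is to prove both biconditionals by reducing them to the already-established correctness of the pointwise standard translation $ST_x$ (the preceding unnumbered Proposition), and then to handle the compound constructors of complex inequalities by a straightforward induction. First I would treat the base case, a single inequality $\phi \leq \psi$. By definition, $\mathbb{M} \Vdash \phi \leq \psi$ means that for all $w \in W$, if $(W,R,V),w \Vdash \phi$ then $(W,R,V),w \Vdash \psi$. Applying the pointwise translation proposition to each side, this is equivalent to: for all $w \in W$, if $\mathbb{M} \vDash ST_x(\phi)[x := w]$ then $\mathbb{M} \vDash ST_x(\psi)[x := w]$, which is exactly $\mathbb{M} \vDash \forall x(ST_x(\phi) \to ST_x(\psi)) = ST(\phi \leq \psi)$. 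This settles the base case and simultaneously establishes the first biconditional of the proposition.

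For the second biconditional I would induct on the structure of $\mathsf{Comp}$. The base case is the inequality case just handled. For $\mathsf{Comp}_1 \,\&\, \mathsf{Comp}_2$: by the semantic clause, $\mathbb{M} \Vdash \mathsf{Comp}_1 \,\&\, \mathsf{Comp}_2$ iff $\mathbb{M} \Vdash \mathsf{Comp}_1$ and $\mathbb{M} \Vdash \mathsf{Comp}_2$, which by the induction hypothesis is iff $\mathbb{M} \vDash ST(\mathsf{Comp}_1)$ and $\mathbb{M} \vDash ST(\mathsf{Comp}_2)$, i.e.\ $\mathbb{M} \vDash ST(\mathsf{Comp}_1) \land ST(\mathsf{Comp}_2) = ST(\mathsf{Comp}_1 \,\&\, \mathsf{Comp}_2)$. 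The case of $\Rightarrow$ is identical with $\land$ replaced by $\to$ throughout. For the quantifier cases, consider $\forall p\,\mathsf{Comp}$: by the semantic clause this holds iff for all $X \subseteq W$, $(W,R,V^p_X) \Vdash \mathsf{Comp}$; by the induction hypothesis applied in the model $(W,R,V^p_X)$, this is iff for all $X \subseteq W$, $(W,R,V^p_X) \vDash ST(\mathsf{Comp})$, which is precisely the meaning of $\mathbb{M} \vDash \forall P\, ST(\mathsf{Comp}) = ST(\forall p\,\mathsf{Comp})$ once one observes that the unary predicate symbol $P$ ranges over subsets of $W$ and that reinterpreting $P$ as $X$ in the classical semantics matches reinterpreting $p$ as $X$ in the valuation. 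The cases $\exists p$, $\forall \nomi$, $\exists \nomi$ are handled the same way, using that $V^{\nomi}_v$ reinterprets the constant $i$ as the element $v$.

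There is no real obstacle here; the only point requiring a little care is the bookkeeping in the quantifier cases, namely that the operation $V \mapsto V^p_X$ on valuations corresponds exactly, under $ST$, to reinterpreting the second-order variable $P$ as the subset $X$, and likewise that $V \mapsto V^{\nomi}_v$ corresponds to reinterpreting the first-order constant $i$ as $v$. This compatibility is immediate from the definition of $ST_x$ (clauses $ST_x(p) := Px$ and $ST_x(\nomi) := x = i$) together with the fact that $ST$ commutes with the quantifier constructors by definition, but it should be stated explicitly so that the induction hypothesis can legitimately be invoked in the modified model. Once this is in place, each inductive step is a one-line unfolding of the corresponding semantic clause against the corresponding clause in the definition of $ST$ for complex inequalities.
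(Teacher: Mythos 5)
Your proof is correct. The paper states this proposition without giving any proof, treating it as routine; your argument --- the base case for a single inequality via the pointwise correctness of $ST_{x}$, followed by induction on the structure of $\mathsf{Comp}$ with the explicit observation that $V\mapsto V^{p}_{X}$ (resp.\ $V\mapsto V^{\nomi}_{v}$) corresponds under $ST$ to reinterpreting the predicate symbol $P$ as $X$ (resp.\ the constant $i$ as $v$) --- is exactly the standard argument the author is implicitly relying on.
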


\section{Sahlqvist formulas in second-order propositional modal logic}\label{Sec:Sahl}

In this section, we define Sahlqvist formulas of second-order propositional modal logic step by step.

We first define (quantifier-free) positive formulas $\mathsf{POS}(\vec p)$ whose propositonal variables are among $\vec p$:
$$\mathsf{POS}(\vec p)::=p\mid \bot\mid \top \mid\mathsf{POS}(\vec p)\land\mathsf{POS}(\vec p) \mid\mathsf{POS}(\vec p)\lor\mathsf{POS}(\vec p)\mid \Box\mathsf{POS}(\vec p)\mid \Diamond\mathsf{POS}(\vec p)$$

where $p$ is in $\vec p$. These positive formulas have similar roles to the positive consequent part in Sahlqvist formulas in basic modal logic, which are going to receive minimal valuations. The reason why we do not allow propositional quantifiers in positive formulas is that we want the formula after receiving the minimal valuations to be translated into a first-order formula, while propositional quantifiers will make it second-order.

\subsection{The $\Pi_1$-fragment: Sahlqvist formulas in basic modal logic}

We define the $\Pi_1$-Sahlqvist antecedent $\mathsf{Sahl}_{1}(\vec p)$ whose propositonal variables are among $\vec p$:
$$\mathsf{Sahl}_{1}(\vec p)::=\Box^{n}p\mid \bot\mid \top\mid \neg\mathsf{POS}(\vec p)\mid\mathsf{Sahl}_{1}(\vec p)\land\mathsf{Sahl}_{1}(\vec p) \mid \Diamond\mathsf{Sahl}_{1}(\vec p)$$

where $p$ is in $\vec p$.

Then the $\Pi_1$-Sahlqvist formulas are defined as $\forall\vec p(\mathsf{Sahl}_{1}(\vec p)\to\mathsf{POS}(\vec p))$. Indeed, Sahlqvist formulas\footnote{In \cite[Chapter 3]{BRV01}, what we call Sahlqvist formulas are called Sahlqvist implications.} in the basic modal logic setting can be treated as universally quantified by propositional quantifiers which bind all occurrences of propositional variables, so in this sense the $\Pi_1$-Sahlqvist formulas can be taken as the Sahlqvist formulas in basic modal logic.

\subsection{The $\Pi_2$-fragment}

We define the $\mathsf{PIA}$ formula $\mathsf{PIA}(\vec q, \vec p)$ as follows:
$$\mathsf{PIA}(\vec q, \vec p)::=p\mid\Box\mathsf{PIA}(\vec q, \vec p)\mid \mathsf{PIA}(\vec q, \vec p)\land\mathsf{PIA}(\vec q, \vec p)\mid \mathsf{POS}(\vec q)\lor\mathsf{PIA}(\vec q, \vec p)$$

where $p$ is in $\vec p$. Here the $\mathsf{PIA}$ formula has two bunches of propositional variables: $\vec q$ is to receive minimal valuations for $\vec q$ from somewhere else, and $\vec p$ is used to compute minimal valuations for $\vec p$. Then it is easy to see that $\mathsf{PIA}(\vec q, \vec p)$ is equivalent to the form $\bigwedge\Box(\mathsf{POS}(\vec q)\lor\Box(\mathsf{POS}(\vec q)\lor\ldots p))$, where $p$ is in $\vec p$.

Now we can define $\Pi_2$-Sahlqvist antecedents as follows:
$$\mathsf{Sahl}_{2}(\vec p)::=\mathsf{Sahl}_{1}(\vec p)\mid
\forall\vec q(\mathsf{Sahl}_1(\vec q)\to\mathsf{PIA}(\vec q,\vec p))\mid\mathsf{Sahl}_{2}(\vec p)\land\mathsf{Sahl}_{2}(\vec p)\mid \Diamond\mathsf{Sahl}_{2}(\vec p)$$

Then $\Pi_2$-Sahlqvist formulas are defined as $\forall\vec p(\mathsf{Sahl}_{2}(\vec p)\to\mathsf{POS}(\vec p))$.

It is easy to see that formulas of the form $\forall\vec p(\mathsf{Sahl}_1(\vec p) \land \forall \vec q(\mathsf{Sahl}_1(\vec q) \to \mathsf{PIA}(\vec q,\vec p)) \to \mathsf{POS}(\vec p))$ are in the $\Pi_2$-hierarchy.

\subsection{The $\Pi_n$-fragment}

Now for the $\Pi_n$-fragment, assume that we have already defined $\Pi_{n-1}$-Sahlqvist antecedents $\mathsf{Sahl}_{n-1}(\vec p)$ and $\Pi_{n-1}$-Sahlqvist formulas $\forall\vec p(\mathsf{Sahl}_{n-1}(\vec p)\to\mathsf{POS}(\vec p))$, then we can define $\Pi_n$-Sahlqvist antecedents as follows:
$$\mathsf{Sahl}_{n}(\vec p)::=\mathsf{Sahl}_{n-1}(\vec p)\mid
\forall\vec q(\mathsf{Sahl}_{n-1}(\vec q)\to\mathsf{PIA}(\vec q,\vec p))\mid\mathsf{Sahl}_{n}(\vec p)\land\mathsf{Sahl}_{n}(\vec p)\mid \Diamond\mathsf{Sahl}_{n}(\vec p)$$

Then $\Pi_n$-Sahlqvist formulas are defined as $\forall\vec p(\mathsf{Sahl}_{n}(\vec p)\to\mathsf{POS}(\vec p))$.

\section{The Algorithm $\mathsf{ALBA}^{\mathsf{SOMPL}}$}\label{Sec:ALBA}

In the present section, we define the correspondence algorithm $\mathsf{ALBA}^{\mathsf{SOMPL}}$ for second-order propositional modal logic, in the style of \cite{CoGoVa06,CoPa12}. The algorithm receives a $\Pi_n$-Sahlqvist formula $\forall\vec p(\mathsf{Sahl}_{n}(\vec p)\to\mathsf{POS}(\vec p))$ as input and goes in three stages.

\begin{enumerate}

\item \textbf{Preprocessing and first approximation}:

The algorithm receives a $\Pi_n$-Sahlqvist formula $\forall\vec p(\mathsf{Sahl}_{n}(\vec p)\to\mathsf{POS}(\vec p))$ as input, and then apply the rewriting rule:

$$\infer{\forall\vec p(\mathsf{Sahl}_{n}(\vec p)\leq\mathsf{POS}(\vec p))}{\forall\vec p(\mathsf{Sahl}_{n}(\vec p)\to\mathsf{POS}(\vec p))}
$$

Then apply the first-approximation rule:

$$\infer{\forall\vec p\forall\nomi_0(\nomi_0\leq\mathsf{Sahl}_{n}(\vec p)\ \Rightarrow\ \nomi_0\leq\mathsf{POS}(\vec p))}{\forall\vec p(\mathsf{Sahl}_{n}(\vec p)\leq\mathsf{POS}(\vec p))}$$

\item \textbf{The reduction stage}:

In this stage, we aim at reducing $\nomi\leq\mathsf{Sahl}_{n}(\vec p)$ to a complex inequality in which $p$ occurs either in the form $\phi\leq p$ where $\phi$ is pure or in the form $\nomj\leq\neg\mathsf{POS}(\vec p)$.

\begin{enumerate}

\item The commutativity rule and the associativity rule for $\&$;

\item The rules for nominals:

\begin{enumerate}
\item Splitting rule:
$$
\infer[(Spl-Nom)]{\nomi\leq\alpha\ \&\ \nomi\leq\beta}{\nomi\leq\alpha\land\beta}
$$
\item Separation rule:
$$
\infer[(Sep-Nom)]{\nomi\leq\alpha\ \Rightarrow\ \nomi\leq\beta}{\nomi\leq\alpha\to\beta}
$$
\item Quantifier rule:
$$
\infer[(Quant-Nom)]{\forall q(\nomi\leq\alpha)}{\nomi\leq\forall q\alpha}
$$
\item Approximation rule:
$$
\infer[(Approx-Nom)]{\exists \nomj(\nomj\leq\alpha\ \&\ \nomi\leq\Diamond\nomj)}{\nomi\leq\Diamond\alpha}
$$
The nominals introduced by the approximation rule must not occur in the whole complex inequality before applying the rule.
\end{enumerate}

\item The residuation rules:

$$
\infer[(Res-\Box)]{\Diamondblack\alpha\leq\beta}{\alpha\leq\Box\beta}
\qquad
\infer[(Res-\lor)]{\alpha\land\neg\beta\leq\gamma}{\alpha\leq\beta\lor\gamma}
$$

\item The splitting rule:
$$
\infer[(Splitting)]{\alpha\leq\beta\ \&\ \alpha\leq\gamma}{\alpha\leq\beta\land\gamma}
$$

\item The quantifier rules:

$$
\infer[(Scope-\&)]{\exists\nomj(\mathsf{Comp}_{1}\ \&\ \mathsf{Comp}_{2})}{\exists\nomj(\mathsf{Comp}_{1})\ \&\ \mathsf{Comp}_{2}}
\qquad
\infer[(Scope-\Rightarrow)]{\forall\nomj(\mathsf{Comp}_{1}\ \Rightarrow\ \mathsf{Comp}_{2})}{\exists\nomj(\mathsf{Comp}_{1})\ \Rightarrow\ \mathsf{Comp}_{2}}
$$

where $\mathsf{Comp}_{2}$ does not have free occurrences of $\nomj$. 

$$
\infer[(Ex-pq)]{\forall p\forall q(\mathsf{Comp})}{\forall q\forall p(\mathsf{Comp})}
\qquad
\infer[(Ex-p\nomi)]{\forall p\forall\nomi(\mathsf{Comp})}{\forall\nomi\forall p(\mathsf{Comp})}
$$
$$
\infer[(Ex-\nomi p)]{\forall\nomi\forall p(\mathsf{Comp})}{\forall p\forall\nomi(\mathsf{Comp})}
\qquad
\infer[(Ex-\nomj\nomi)]{\forall\nomj\forall\nomi(\mathsf{Comp})}{\forall\nomi\forall\nomj(\mathsf{Comp})}
$$

$$
\infer[(Spl-Quant-p)]{\forall p(\mathsf{Comp}_{1}\Rightarrow\mathsf{Comp}_{2})\ \&\ \forall p(\mathsf{Comp}_{1}\Rightarrow\mathsf{Comp}_{3})}{\forall p(\mathsf{Comp}_{1}\Rightarrow(\mathsf{Comp}_{2}\&\mathsf{Comp}_{3}))}
$$

$$
\infer[(Spl-Quant-\nomi)]{\forall \nomi(\mathsf{Comp}_{1}\Rightarrow\mathsf{Comp}_{2})\ \&\ \forall\nomi(\mathsf{Comp}_{1}\Rightarrow\mathsf{Comp}_{3})}{\forall\nomi(\mathsf{Comp}_{1}\Rightarrow(\mathsf{Comp}_{2}\&\mathsf{Comp}_{3}))}
$$
\item The Ackermann rule:

In this step, we compute the minimal valuation for propositional variables and use the Ackermann rule to eliminate all the propositional variables.

$$
\infer{\alpha_{1}[\bigvee\psi/q]\leq\beta_{1}[\bigvee\psi/q]\&\ldots\&\alpha_{n}[\bigvee\psi/q]\leq\beta_{n}[\bigvee\psi/q]\Rightarrow\alpha[\bigvee\psi/q]\leq\beta[\bigvee\psi/q]}{\forall q(\alpha_{1}\leq\beta_{1}\&\ldots\&\alpha_{n}\leq\beta_{n}\ \&\ \psi_1\leq q\&\ldots\&\psi_m\leq q\Rightarrow \alpha\leq\beta)}
$$

where:

\begin{enumerate}
\item $\phi[\theta/p]$ means uniformly replace occurrences of $p$ in $\phi$ by $\theta$;
\item $\bigvee\psi=\psi_1\lor\ldots\lor\psi_m$;
\item Each $\alpha_i$ is positive, and each $\beta_i$ negative in $q$, for $1\leq i\leq n$;
\item $\alpha$ is negative in $q$ and $\beta$ is positive in $q$;
\item Each $\psi_i$ is pure (therefore $q$ does not occur in $\psi_{i}$).
\end{enumerate}

\item The packing rule:
$$\infer{\exists\nomi(\mathbf{l}(\alpha_{1},\beta_{1})\land\ldots\land\mathbf{l}(\alpha_{n},\beta_{n})\land\alpha)\leq\beta}{\forall \nomi(\alpha_1\leq\beta_1\&\ldots\&\alpha_n\leq\beta_n\Rightarrow\alpha\leq\beta)}
$$
where $\beta$ does not contain occurrences of $\nomi$.
\end{enumerate}

\item \textbf{Output}: By the execution of the algorithm, we can guarantee (see Theorem \ref{Thm:Success}) that given a $\Pi_n$-Sahlqvist formula as input, we can rewrite it into a pure complex inequality. Then we use standard translation to translate it into a first-order formula.
\end{enumerate}

From the design of the algorithm, we can see that it is specifically designed for $\Pi_n$-Sahlqvist formulas. Therefore, when we try to extend the $\Pi_n$-Sahlqvist fragment, we need to revise the rules accordingly.

\section{Soundness of $\mathsf{ALBA}^{\mathsf{SOPML}}$}\label{Sec:Soundness}

In the present section, we will prove the soundness of the algorithm $\mathsf{ALBA}^{\mathsf{SOPML}}$. The basic proof structure is similar to \cite{CoPa12}.

\begin{theorem}[Soundness]\label{Thm:Soundness}
If $\mathsf{ALBA}^{\mathsf{SOPML}}$ runs successfully on an input $\Pi_n$-Sahlqvist formula
$\forall\vec p(\mathsf{Sahl}_{n}(\vec p)\to\mathsf{POS}(\vec p))$ and outputs a first-order formula $\mathsf{FO}(\forall\vec p(\mathsf{Sahl}_{n}(\vec p)\to\mathsf{POS}(\vec p)))$, then for any Kripke frame $\mathbb{F}=(W,R)$, $$\mathbb{F}\Vdash\forall\vec p(\mathsf{Sahl}_{n}(\vec p)\to\mathsf{POS}(\vec p))\mbox{\ \ iff\ \ }\mathbb{F}\models\mathsf{FO}(\forall\vec p(\mathsf{Sahl}_{n}(\vec p)\to\mathsf{POS}(\vec p))).$$
\end{theorem}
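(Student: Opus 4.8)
The plan is to reduce the global soundness claim to a rule-by-rule verification, in the standard ALBA fashion. First I would observe that the initial rewriting step from $\forall\vec p(\mathsf{Sahl}_n(\vec p)\to\mathsf{POS}(\vec p))$ to $\forall\vec p(\mathsf{Sahl}_n(\vec p)\leq\mathsf{POS}(\vec p))$ is sound because, by the semantics of $\leq$ in a model together with the universal reading of validity on a frame, $\mathbb{F}\Vdash\phi\to\psi$ iff for every valuation $V$ and every world $w$, $(\mathbb{F},V),w\Vdash\phi$ implies $(\mathbb{F},V),w\Vdash\psi$, which is exactly $\mathbb{F}\Vdash\phi\leq\psi$; the outer $\forall\vec p$ then quantifies over the valuations of $\vec p$. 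Next, the first-approximation step introducing a fresh nominal $\nomi_0$ is sound because $\phi\leq\psi$ holds in a model iff for every world $w$, if $w\Vdash\phi$ then $w\Vdash\psi$, and naming $w$ by a nominal $\nomi_0$ (which can be interpreted at any world since $\nomi_0$ is fresh) turns this into $\forall\nomi_0(\nomi_0\leq\phi\Rightarrow\nomi_0\leq\psi)$; here one uses the semantics of the nominal quantifier together with the singleton interpretation of nominals.

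The core of the argument is then to show that each rewriting rule in the reduction stage is \emph{sound}, meaning that the complex inequality (or set of complex inequalities, joined by $\&$, possibly under quantifier prefixes) above the inference line is valid on $\mathbb{F}$ iff the one below it is. I would treat these in groups: (i) the commutativity/associativity rules for $\&$ and the splitting rules $(Spl\text{-}Nom)$, $(Splitting)$, and the separation and quantifier and scope rules, which are all immediate from the clause-by-clause semantics of complex inequalities (e.g.\ $(Scope\text{-}\&)$ and $(Scope\text{-}\Rightarrow)$ are just prenex manipulations valid when $\nomi$ does not occur free in the passive component, and $(Ex\text{-}\ast)$ rules are commutativity of like quantifiers); (ii) the approximation rule $(Approx\text{-}Nom)$, which is sound by exactly the same reasoning as first approximation — $w\Vdash\Diamond\alpha$ iff there is a $v$ with $Rwv$ and $v\Vdash\alpha$, and we name $v$ by a fresh nominal $\nomj$; here the freshness side-condition is essential; (iii) the residuation rules $(Res\text{-}\Box)$ and $(Res\text{-}\lor)$, which hold because $\Diamondblack$ is the left (forward) residual of $\Box$ along $R$ and $(\cdot)\land\neg\beta$ is residual to $\beta\lor(\cdot)$ — these are pointwise semantic equivalences $\alpha\leq\Box\beta$ iff $\Diamondblack\alpha\leq\beta$ and $\alpha\leq\beta\lor\gamma$ iff $\alpha\wedge\neg\beta\leq\gamma$, checked at the level of valuations; and (iv) the packing rule, which repackages a quantified conjunction of inequalities under $\forall\nomi$ into a single inequality using the binary modality $\mathbf{l}$ via its clause $w\Vdash\mathbf{l}(\phi,\psi)$ iff $\phi\leq\psi$ in the model, again using freshness of $\nomi$ in $\beta$.

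The main obstacle, as always in ALBA soundness, is the \textbf{Ackermann rule}. Here I would isolate and prove an Ackermann lemma adapted to the $\mathsf{SOPML}$ setting: given that each $\psi_i$ is pure (hence its extension does not depend on the valuation of $q$), that each $\alpha_i$ is positive and each $\beta_i$ negative in $q$, and that $\alpha$ is negative and $\beta$ positive in $q$, the complex inequality $\psi_1\leq q\ \&\ \cdots\ \&\ \psi_m\leq q$ forces $V(q)\supseteq\bigcup_i V(\psi_i)$, and the minimal such choice $V(q)=\bigcup_i V(\psi_i)=V(\bigvee\psi)$ is the witness that makes the whole $\forall q$-statement equivalent to its $q$-instantiation by $\bigvee\psi$. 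The two directions: if the $\forall q$-statement holds then in particular it holds at $V^q_{\bigvee\psi}$, which after simplifying $\psi_i\leq\bigvee\psi$ (trivially true) yields the conclusion with $[\bigvee\psi/q]$ substituted; conversely, if the substituted conclusion holds, then for any $X$ with $V^q_X\Vdash\psi_i\leq q$ for all $i$ we have $V(\bigvee\psi)\subseteq X$, and the monotonicity/antitonicity hypotheses give $V^q_{\bigvee\psi}\Vdash\alpha_i\leq\beta_i \Rightarrow V^q_X\Vdash\alpha_i\leq\beta_i$ on the left of $\Rightarrow$ in the right direction and $V^q_X\Vdash\alpha\leq\beta \Rightarrow V^q_{\bigvee\psi}\Vdash\alpha\leq\beta$ on the right, so the $q$-instance propagates to all $X$. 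The one genuinely new point relative to the classical case is bookkeeping the interaction of this argument with the outer $\mathsf{SOPML}$ and nominal quantifiers already present in the complex inequality: since those quantifiers range over subsets of the fixed $W$ and the Ackermann manipulation is performed inside their scope at a fixed valuation of the other variables, soundness is preserved uniformly, and I would record this as a short lemma about substitution under quantifier prefixes. Finally, since every rule preserves validity on $\mathbb{F}$ and the output is a pure complex inequality, Proposition~\ref{Prop:ST:ineq:quasi:mega} converts it to an equivalent first-order sentence, completing the chain $\mathbb{F}\Vdash\forall\vec p(\mathsf{Sahl}_n(\vec p)\to\mathsf{POS}(\vec p))$ iff $\mathbb{F}\models\mathsf{FO}(\cdots)$.
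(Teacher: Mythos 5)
Your proposal takes essentially the same route as the paper: reduce the theorem to a chain of frame-validity-preserving equivalences (rewrite rule, first approximation, each Stage-2 rule, standard translation), with the rule-by-rule soundness checks grouped exactly as the paper does and the Ackermann lemma carrying the main weight via instantiation at the minimal valuation $V^{q}_{\bigvee\psi}$ in one direction and the positivity/negativity argument in the other. The only blemish is that the two monotonicity implications in your Ackermann paragraph are stated in the reversed direction: since each $\alpha_i$ is positive and each $\beta_i$ negative in $q$ and $V(\bigvee\psi)\subseteq X$, what one actually needs (and what holds) is that $V^{q}_{X}\Vdash\alpha_i\leq\beta_i$ implies $V^{q}_{\bigvee\psi}\Vdash\alpha_i\leq\beta_i$ for the premises, and $V^{q}_{\bigvee\psi}\Vdash\alpha\leq\beta$ implies $V^{q}_{X}\Vdash\alpha\leq\beta$ for the conclusion; the surrounding discussion makes clear this is the intended use, so the argument goes through as in the paper.
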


\begin{proof}
The proof goes similarly to \cite[Theorem 8.1]{CoPa12}. Let $\forall\vec p(\mathsf{Sahl}_{n}(\vec p)\leq\mathsf{POS}(\vec p))$ denote the complex inequality after the first rewrite rule, $\forall\vec p\forall\nomi_0(\nomi_0\leq\mathsf{Sahl}_{n}(\vec p)\ \Rightarrow\ \nomi_0\leq\mathsf{POS}(\vec p))$ denote the complex inequality after the first approximation rule, $\mathsf{Comp}(\forall\vec p\forall\nomi_0(\nomi_0\leq\mathsf{Sahl}_{n}(\vec p)\ \Rightarrow\ \nomi_0\leq\mathsf{POS}(\vec p)))$ denote the complex inequality after Stage 2, and $\mathsf{FO}(\forall\vec p(\mathsf{Sahl}_{n}(\vec p)\to\mathsf{POS}(\vec p)))$ denote the standard translation of the complex inequality obtained after Stage 2, then it suffices to show the equivalence from (\ref{aCrct:Eqn0}) to (\ref{aCrct:Eqn4}) given below:
\begin{eqnarray}
&&\mathbb{F}\Vdash\forall\vec p(\mathsf{Sahl}_{n}(\vec p)\to\mathsf{POS}(\vec p))\label{aCrct:Eqn0}\\
&&\mathbb{F}\Vdash\forall\vec p(\mathsf{Sahl}_{n}(\vec p)\leq\mathsf{POS}(\vec p))\label{aCrct:Eqn1}\\
&&\mathbb{F}\Vdash\forall\vec p\forall\nomi_0(\nomi_0\leq\mathsf{Sahl}_{n}(\vec p)\ \Rightarrow\ \nomi_0\leq\mathsf{POS}(\vec p))\label{aCrct:Eqn2}\\
&&\mathbb{F}\Vdash\mathsf{Comp}(\forall\vec p\forall\nomi_0(\nomi_0\leq\mathsf{Sahl}_{n}(\vec p)\ \Rightarrow\ \nomi_0\leq\mathsf{POS}(\vec p)))\label{aCrct:Eqn3}\\
&&\mathbb{F}\vDash\mathsf{FO}(\forall\vec p(\mathsf{Sahl}_{n}(\vec p)\to\mathsf{POS}(\vec p)))\label{aCrct:Eqn4}
\end{eqnarray}

\begin{itemize}
\item the equivalence between (\ref{aCrct:Eqn0}) and (\ref{aCrct:Eqn1}) follows from Proposition \ref{prop:Soundness:stage:1:1};
\item the equivalence between (\ref{aCrct:Eqn1}) and (\ref{aCrct:Eqn2}) follows from Proposition \ref{prop:Soundness:stage:1:2};
\item the equivalence between (\ref{aCrct:Eqn2}) and (\ref{aCrct:Eqn3}) follows from Proposition \ref{Prop:Stage:2};
\item the equivalence between (\ref{aCrct:Eqn3}) and (\ref{aCrct:Eqn4}) follows from Proposition \ref{Prop:ST:ineq:quasi:mega}.
\end{itemize}
\end{proof}

In the remainder of this section, we prove the soundness of the rules in Stage 1 and 2.

\begin{proposition}[Soundness of the first rewrite rule in Stage 1]\label{prop:Soundness:stage:1:1}
The first rewrite rule is sound in both directions in $\mathbb{F}$, i.e.\ the formula before the rule is valid in $\mathbb{F}$ iff the complex inequality after the rule is valid in $\mathbb{F}$.
\end{proposition}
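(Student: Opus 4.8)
The plan is to unwind both notions of frame validity to explicit strings of universal quantifiers and to observe that they agree up to the order of two adjacent universal quantifiers. Fix a frame $\mathbb{F}=(W,R)$, write $\psi:=\mathsf{Sahl}_{n}(\vec p)\to\mathsf{POS}(\vec p)$, abbreviate by $\vec X$ a matching tuple of subsets of $W$, and write $V^{\vec p}_{\vec X}$ for the valuation obtained from $V$ by resetting the variables $\vec p$ to $\vec X$ (a trivial iteration of the clause $V\mapsto V^{p}_{X}$). Using the definition of frame validity and the satisfaction clause for the propositional quantifier $\forall$, and then unfolding $\psi$ as an implication between local satisfaction statements, one obtains
\[
\mathbb{F}\Vdash\forall\vec p\,\psi\ \iff\ \text{for all } V, w, \vec X\colon\ (W,R,V^{\vec p}_{\vec X}),w\Vdash\mathsf{Sahl}_{n}(\vec p)\ \Longrightarrow\ (W,R,V^{\vec p}_{\vec X}),w\Vdash\mathsf{POS}(\vec p).
\]

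For the complex-inequality side, I would unwind in turn: the definition of $\mathbb{F}\Vdash\mathsf{Comp}$ (validity on a frame, i.e.\ for all $V$), the semantic clause for $\forall\vec p$ on complex inequalities (for all $\vec X$), and the semantic clause for an inequality in a model (for all worlds $w$). This yields
\[
\mathbb{F}\Vdash\forall\vec p(\mathsf{Sahl}_{n}(\vec p)\leq\mathsf{POS}(\vec p))\ \iff\ \text{for all } V, \vec X, w\colon\ (W,R,V^{\vec p}_{\vec X}),w\Vdash\mathsf{Sahl}_{n}(\vec p)\ \Longrightarrow\ (W,R,V^{\vec p}_{\vec X}),w\Vdash\mathsf{POS}(\vec p).
\]
The two right-hand sides have identical matrices, and their quantifier prefixes differ only by swapping the two universal quantifiers $\forall w$ and $\forall\vec X$; hence they are equivalent, which establishes both directions of the rule simultaneously.

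There is no genuine obstacle here: the content is entirely definitional. The only care required is bookkeeping — namely that frame validity of a \emph{formula} quantifies the evaluation world $w$ universally ``from outside'', whereas the complex-inequality semantics absorbs that universal quantification into the clause for $\leq$, so one must line up these two quantifications over $w$ correctly; matching the $\forall V$ blocks and the $\forall\vec p$ blocks is immediate, the latter precisely because the propositional-quantifier prefix is given the same semantics for formulas and for complex inequalities (and in fact $V$ is irrelevant on both sides, since the input formula has no nominals and all its propositional variables lie in $\vec p$, though this is not needed). If desired, the statement can be recorded in the slightly more general form: for all $\mathsf{SOPML}$ formulas $\phi,\chi$ and all frames $\mathbb{F}$, $\mathbb{F}\Vdash\forall\vec p(\phi\to\chi)$ iff $\mathbb{F}\Vdash\forall\vec p(\phi\leq\chi)$, of which the present rewrite rule is the instance $\phi=\mathsf{Sahl}_{n}(\vec p)$, $\chi=\mathsf{POS}(\vec p)$.
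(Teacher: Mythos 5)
Your proof is correct and follows essentially the same route as the paper's: both unwind frame validity of the quantified implication and of the quantified inequality into strings of universal quantifiers over $V$, $\vec X$, and $w$, and observe that the two sides agree after commuting the adjacent universal quantifiers over $w$ and $\vec X$. The paper performs this swap implicitly in passing from global truth of $\forall\vec p\,\psi$ to ``for all $\vec X$, global truth of $\psi$,'' whereas you make it explicit, but the content is identical.
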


\begin{proof}
$\ $
\begin{center}
\begin{tabular}{r l}
& $\mathbb{F}\Vdash\forall\vec p(\mathsf{Sahl}_{n}(\vec p)\to\mathsf{POS}(\vec p))$\\
iff & for all $V$, ($\mathbb{F},V)\Vdash\forall\vec p(\mathsf{Sahl}_{n}(\vec p)\to\mathsf{POS}(\vec p))$\\
iff & for all $V$, for all $\vec X\subseteq W$, ($\mathbb{F},V^{\vec p}_{\vec X})\Vdash\mathsf{Sahl}_{n}(\vec p)\to\mathsf{POS}(\vec p)$\\
iff & for all $V$, for all $\vec X\subseteq W$, ($\mathbb{F},V^{\vec p}_{\vec X})\Vdash\mathsf{Sahl}_{n}(\vec p)\leq\mathsf{POS}(\vec p)$\\
iff & for all $V$, ($\mathbb{F},V)\Vdash\forall\vec p(\mathsf{Sahl}_{n}(\vec p)\leq\mathsf{POS}(\vec p))$\\
iff & $\mathbb{F}\Vdash\forall\vec p(\mathsf{Sahl}_{n}(\vec p)\leq\mathsf{POS}(\vec p))$.\\
\end{tabular}
\end{center}
\end{proof}

\begin{proposition}[Soundness of the first approximation rule in Stage 1]\label{prop:Soundness:stage:1:2}
The first approximation rule is sound in both directions in $\mathbb{F}$, i.e.\ the complex inequality before the rule is valid in $\mathbb{F}$ iff the complex inequality after the rule is valid in $\mathbb{F}$.
\end{proposition}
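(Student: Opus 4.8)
The plan is to unfold the semantics of both complex inequalities down to a statement quantifying over all valuations, all tuples $\vec X\subseteq W$ interpreting $\vec p$, and all points of $W$, and then to observe that the two unfoldings are literally the same statement. This is in the same spirit as the proof of Proposition \ref{prop:Soundness:stage:1:1}.

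First I would rewrite the premise. By definition of frame validity and the semantics of $\forall\vec p$ and of inequalities, $\mathbb{F}\Vdash\forall\vec p(\mathsf{Sahl}_{n}(\vec p)\leq\mathsf{POS}(\vec p))$ holds iff for every valuation $V$, every $\vec X\subseteq W$, and every $w\in W$, if $(\mathbb{F},V^{\vec p}_{\vec X}),w\Vdash\mathsf{Sahl}_{n}(\vec p)$ then $(\mathbb{F},V^{\vec p}_{\vec X}),w\Vdash\mathsf{POS}(\vec p)$.

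Next I would unfold the conclusion. Using the clauses for $\forall p$, $\forall\nomi$, and $\Rightarrow$ on complex inequalities, $\mathbb{F}\Vdash\forall\vec p\forall\nomi_0(\nomi_0\leq\mathsf{Sahl}_{n}(\vec p)\ \Rightarrow\ \nomi_0\leq\mathsf{POS}(\vec p))$ holds iff for every $V$, every $\vec X\subseteq W$, and every $v\in W$, $(\mathbb{F},(V^{\vec p}_{\vec X})^{\nomi_0}_{v})\Vdash\nomi_0\leq\mathsf{Sahl}_{n}(\vec p)$ implies $(\mathbb{F},(V^{\vec p}_{\vec X})^{\nomi_0}_{v})\Vdash\nomi_0\leq\mathsf{POS}(\vec p)$. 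Here I would use the observation that $\nomi_0$ does not occur in $\mathsf{Sahl}_{n}(\vec p)$ or $\mathsf{POS}(\vec p)$ (these are genuine $\mathsf{SOPML}$ formulas, containing no nominals), so that reassigning the interpretation of $\nomi_0$ to an arbitrary $v$ does not affect their truth; combined with the semantics of inequalities and of nominals, this gives that $(\mathbb{F},(V^{\vec p}_{\vec X})^{\nomi_0}_{v})\Vdash\nomi_0\leq\phi$ is equivalent to $(\mathbb{F},V^{\vec p}_{\vec X}),v\Vdash\phi$ for $\phi\in\{\mathsf{Sahl}_{n}(\vec p),\mathsf{POS}(\vec p)\}$. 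Substituting this in, the conclusion becomes: for every $V$, every $\vec X\subseteq W$, and every $v\in W$, if $(\mathbb{F},V^{\vec p}_{\vec X}),v\Vdash\mathsf{Sahl}_{n}(\vec p)$ then $(\mathbb{F},V^{\vec p}_{\vec X}),v\Vdash\mathsf{POS}(\vec p)$, which is exactly the statement obtained for the premise (up to renaming $w$ as $v$). This yields the biconditional, and I would lay out the chain of equivalences in a tabular environment as in the previous proof.

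There is no genuine obstacle here; the argument is a routine semantic unfolding. The only point requiring care is the bookkeeping around the fresh nominal $\nomi_0$: one must record that it is absent from the compared formulas, so that its reinterpretation is harmless and the inequality $\nomi_0\leq\phi$ expresses precisely ``$\phi$ is true at the point named by $\nomi_0$''. Everything else follows directly from the semantic clauses for inequalities, complex inequalities, nominals, and the propositional quantifier.
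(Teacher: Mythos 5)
Your proposal is correct and follows essentially the same route as the paper: unfold both sides to a statement quantifying over valuations, subsets $\vec X$, and points, and match them via the observation that $\nomi_0\leq\phi$ under a valuation naming $v$ by $\nomi_0$ expresses truth of $\phi$ at $v$. Your explicit remark that the freshness of $\nomi_0$ in $\mathsf{Sahl}_n(\vec p)$ and $\mathsf{POS}(\vec p)$ is what makes the reinterpretation harmless is left implicit in the paper's chain of equivalences, but the argument is the same.
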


\begin{proof}
$\mathbb{F}\Vdash\forall\vec p(\mathsf{Sahl}_{n}(\vec p)\leq\mathsf{POS}(\vec p))$\\
iff for all $V$, ($\mathbb{F},V)\Vdash\forall\vec p(\mathsf{Sahl}_{n}(\vec p)\leq\mathsf{POS}(\vec p))$\\
iff for all $V$, for all $\vec X\subseteq W$, ($\mathbb{F},V^{\vec p}_{\vec X})\Vdash\mathsf{Sahl}_{n}(\vec p)\leq\mathsf{POS}(\vec p)$\\
iff for all $V$, all $\vec X\subseteq W$, all $w\in W$, ($\mathbb{F},V^{\vec p}_{\vec X}),w\Vdash\mathsf{Sahl}_{n}(\vec p)$ implies $(\mathbb{F},V^{\vec p}_{\vec X}),w\Vdash\mathsf{POS}(\vec p)$\\
iff for all $V$, all $\vec X\subseteq W$, all $w\in W$, ($\mathbb{F},V^{\vec p,\nomi_0}_{\vec X,w})\Vdash\nomi_0\leq\mathsf{Sahl}_{n}(\vec p)$ implies $(\mathbb{F},V^{\vec p,\nomi_0}_{\vec X,w})\Vdash\nomi_0\leq\mathsf{POS}(\vec p)$\\
iff for all $V$, ($\mathbb{F},V)\Vdash\forall \vec p\forall\nomi_0(\nomi_0\leq\mathsf{Sahl}_{n}(\vec p)\Rightarrow\nomi_0\leq\mathsf{POS}(\vec p))$\\
iff $\mathbb{F}\Vdash\forall \vec p\forall\nomi_0(\nomi_0\leq\mathsf{Sahl}_{n}(\vec p)\Rightarrow\nomi_0\leq\mathsf{POS}(\vec p))$.
\end{proof}

\begin{proposition}[Soundness of the rules in Stage 2]\label{Prop:Stage:2}
The rules in Stage 2 are sound in both directions in $\mathbb{F}$, i.e.\ the complex inequality before the rule is valid in $\mathbb{F}$ iff the complex inequality after the rule is valid in $\mathbb{F}$.
\end{proposition}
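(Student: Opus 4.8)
The plan is to verify soundness rule-by-rule, in both directions, by unwinding the semantics of complex inequalities given in Section 2.2 and reducing each equivalence to a propositional-logic or first-order tautology about the interpretation in an arbitrary model $\mathbb{M}=(W,R,V)$. Since validity in a frame $\mathbb{F}$ means validity in $(\mathbb{F},V)$ for every $V$, and since each rule in Stage 2 acts on a sub-complex-inequality inside the scope of some quantifier prefix and connective context, it suffices to show for each rule that the premise complex inequality and the conclusion complex inequality are satisfied by exactly the same models $\mathbb{M}$; the compositionality of the semantics of $\&$, $\Rightarrow$, and the (propositional and nominal) quantifiers then lifts this to the full complex inequalities, and then to frame validity. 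So the real content is a finite case check, one case per rule listed in Stage 2(a)--(h).

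The individual cases split into three flavours. First, the purely logical rules --- commutativity/associativity of $\&$, $(Spl\text{-}Nom)$, $(Sep\text{-}Nom)$, $(Quant\text{-}Nom)$, $(Splitting)$, $(Scope\text{-}\&)$, $(Scope\text{-}\Rightarrow)$, the exchange rules $(Ex\text{-}\ast)$, and $(Spl\text{-}Quant\text{-}\ast)$ --- are immediate from the defining clauses: e.g.\ $(Spl\text{-}Nom)$ is just the distribution of "$w$ is the point named by $\nomi$ and $w\Vdash\alpha\land\beta$" over conjunction; $(Scope\text{-}\Rightarrow)$ uses that $\nomi$ does not occur free in $\mathsf{Comp}_2$, so $\exists\nomi\mathsf{Comp}_1\Rightarrow\mathsf{Comp}_2$ and $\forall\nomi(\mathsf{Comp}_1\Rightarrow\mathsf{Comp}_2)$ express the same implication; the exchange rules are trivial from commutativity of like quantifiers and the fact that $V^{p}_X$ and $V^{\nomi}_v$ modify disjoint parts of the valuation. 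Second, the residuation rules $(Res\text{-}\Box)$ and $(Res\text{-}\lor)$ and the $(Approx\text{-}Nom)$ rule: these are the standard adjunction facts, namely $\Diamondblack\dashv\Box$ (from $R$ and $R^{-1}$ being converse, so $ST_x(\Diamondblack\alpha)\to ST_x(\beta)$ globally iff $ST_x(\alpha)\to ST_x(\Box\beta)$ globally), the Boolean residuation $\alpha\le\beta\lor\gamma$ iff $\alpha\land\neg\beta\le\gamma$, and the fact that $\nomi\le\Diamond\alpha$ holds at the point $w$ named by $\nomi$ iff some $R$-successor $v$ of $w$ satisfies $\alpha$, i.e.\ iff there is a name $\nomj$ (fresh, hence freely choosable) with $\nomj\le\alpha$ and $\nomi\le\Diamond\nomj$. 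Third, the two substantive rules: the Ackermann rule and the packing rule.

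The main obstacle is the Ackermann rule; the packing rule, by contrast, is essentially a reformulation --- $\forall\nomi(\alpha_1\le\beta_1\,\&\,\dots\,\&\,\alpha_n\le\beta_n\Rightarrow\alpha\le\beta)$ says that for every point $v$ named by $\nomi$, if every $\mathbf{l}(\alpha_i,\beta_i)$ holds then $\alpha\le\beta$ holds, and since $\mathbf{l}(\alpha_i,\beta_i)$ is the global statement $\alpha_i\le\beta_i$ and $\beta$ has no free $\nomi$, this is exactly $\exists\nomi(\mathbf{l}(\alpha_1,\beta_1)\land\dots\land\mathbf{l}(\alpha_n,\beta_n)\land\alpha)\le\beta$. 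For the Ackermann rule I would argue as in \cite[Lemma 8.4]{CoPa12}: fix $\mathbb{F}$ and a valuation $V$ for all variables other than $q$, and set $\psi^\ast=\bigcup_i V(\psi_i)$ (well-defined since each $\psi_i$ is pure, hence its value is independent of $q$). For the downward direction (conclusion implies premise) one instantiates $q$ by $\psi^\ast$, using that this choice satisfies $\psi_i\le q$ for all $i$ and that the side-conditions on positivity/negativity of $\alpha_i,\beta_i,\alpha,\beta$ in $q$ make the substituted inequalities equivalent to the original ones under the valuation sending $q$ to $\psi^\ast$. For the upward direction (premise implies conclusion) one takes any $X$ with $(\mathbb{F},V^q_X)\Vdash\psi_i\le q$ for all $i$, so $\psi^\ast\subseteq X$; the monotonicity dictated by the polarity conditions --- $\alpha_i$ positive and $\beta_i$ negative in $q$, $\alpha$ negative and $\beta$ positive in $q$ --- gives $(\mathbb{F},V^q_X)\Vdash\alpha_i\le\beta_i$ whenever $(\mathbb{F},V^q_{\psi^\ast})\Vdash\alpha_i\le\beta_i$, and dually for the consequent, so the premise applied at $\psi^\ast$ transfers to $X$. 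The care needed here is exactly the bookkeeping of polarities and the verification that "minimal valuation" $\psi^\ast$ really is a valid valuation of $q$ in the second-order setting; everything else is routine, and I would present the logical and residuation cases tersely and give the Ackermann case in full.
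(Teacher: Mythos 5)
Your proposal takes essentially the same route as the paper's proof: a rule-by-rule verification in which the logical, residuation, scope/exchange and approximation rules are dispatched by unfolding the semantics of complex inequalities, the packing rule by the same chain of equivalences, and the Ackermann rule by the minimal valuation $\bigvee\psi$ together with the polarity conditions. One small correction to your upward direction of the Ackermann case: from $\psi^\ast\subseteq X$ and the polarities ($\alpha_i$ positive, $\beta_i$ negative in $q$) the transfer goes from $(\mathbb{F},V^{q}_{X})\Vdash\alpha_i\leq\beta_i$ to $(\mathbb{F},V^{q}_{\psi^\ast})\Vdash\alpha_i\leq\beta_i$ --- not the other way round as you wrote --- which is exactly the direction needed to establish the substituted antecedent at $\psi^\ast$ before transferring the consequent $\alpha\leq\beta$ back from $\psi^\ast$ to $X$ via the dual polarities of $\alpha$ and $\beta$.
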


\begin{proof}
It suffices to show that each rule in Stage 2 is sound in both directions in $\mathbb{F}$.

\begin{itemize}
\item For the commutativity rule and associativity rule for $\&$, by the validity of $\alpha\land\beta\leftrightarrow\beta\land\alpha$ and $(\alpha\land\beta)\land\gamma\leftrightarrow\alpha\land(\beta\land\gamma)$.

\item For the splitting rule for nominals and the splitting rule for arbitrary formulas, it follows from the following equivalence: for all Kripke frame $\mathbb{F}$ and all $V$, $\mathbb{F},V\Vdash\alpha\leq\beta\land\gamma\mbox{ iff }(\mathbb{F},V\Vdash\alpha\leq\beta\mbox{ and }\mathbb{F},V\Vdash\alpha\leq\gamma)$.

\item For the separation rule for nominals, it follows from the following equivalence: for all $\mathbb{F}=(W,R)$ and all $V$,

$\mathbb{F},V\Vdash\nomi\leq\alpha\to\beta$\\
iff $\mathbb{F},V,V(\nomi)\Vdash\alpha\to\beta$\\
iff $\mathbb{F},V,V(\nomi)\Vdash\alpha$ implies $\mathbb{F},V,V(\nomi)\Vdash\beta$\\
iff $\mathbb{F},V\Vdash\nomi\leq\alpha$ implies $\mathbb{F},V\Vdash\nomi\leq\beta$\\
iff $\mathbb{F},V\Vdash\nomi\leq\alpha\Rightarrow\nomi\leq\beta$.

\item For the quantifier rule for nominals, it follows from the following equivalence: for all $\mathbb{F}=(W,R)$ and any $V$, 

$\mathbb{F},V\Vdash\nomi\leq\forall q\alpha$\\
iff $\mathbb{F},V,V(\nomi)\Vdash\forall q\alpha$\\
iff for all $X\subseteq W$, $\mathbb{F},V^{q}_{X},V(\nomi)\Vdash\alpha$\\
iff for all $X\subseteq W$, $\mathbb{F},V^{q}_{X},V^{q}_{X}(\nomi)\Vdash\alpha$\\
iff for all $X\subseteq W$, $\mathbb{F},V^{q}_{X}\Vdash\nomi\leq\alpha$\\
iff $\mathbb{F},V\Vdash\forall q(\nomi\leq\alpha)$.

\item For the approximation rule for nominals, it suffices to show that for any $\mathbb{F}=(W,R)$ and any $V$, 
\begin{enumerate}
\item if $(\mathbb{F},V)\Vdash\nomi\leq\Diamond\alpha$, then there is a valuation $V^{\nomj}$ such that $V^{\nomj}$ is the same as $V$ except $V^{\nomj}(\nomj)$, and $(\mathbb{F},V^{\nomj})\Vdash\nomi\leq\Diamond\nomj$ and $(\mathbb{F},V^{\nomj})\Vdash\nomj\leq\alpha$;
\item if $(\mathbb{F},V)\Vdash\nomi\leq\Diamond\nomj$ and $(\mathbb{F},V)\Vdash\nomj\leq\alpha$, then $(\mathbb{F},V)\Vdash\nomi\leq\Diamond\alpha$.
\end{enumerate}
For item 1, if $(\mathbb{F},V)\Vdash\nomi\leq\Diamond\alpha$, then $(\mathbb{F},V), V(\nomi)\Vdash\Diamond\alpha$, therefore there exists a $w\in W$ such that $(V(\nomi),w)\in R$ and $(\mathbb{F},V),w\Vdash\alpha$. Now take $V^{\nomj}$ such that $V^{\nomj}$ is the same as $V$ except that $V^{\nomj}(\nomj)=\{w\}$, then $(V^{\nomj}(\nomi), V^{\nomj}(\nomj))\in R$, so $(\mathbb{F},V^{\nomj})\Vdash\nomi\leq\Diamond\nomj$ and $(\mathbb{F},V^{\nomj})\Vdash\nomj\leq\alpha$.

For item 2, suppose $(\mathbb{F},V)\Vdash\nomi\leq\Diamond\nomj$ and $(\mathbb{F},V)\Vdash\nomj\leq\alpha$. Then $(V(\nomi), V(\nomj))\in R$ and $(\mathbb{F},V),V(\nomj)\Vdash\alpha$, so $(\mathbb{F},V),V(\nomi)\Vdash\Diamond\alpha$, therefore $(\mathbb{F},V)\Vdash\nomi\leq\Diamond\alpha$.

\item For the residuation rule for $\Box$, it suffices to show that for any $\mathbb{F}=(W,R)$ and any $V$, $(\mathbb{F},V)\Vdash\Diamondblack\alpha\leq\beta$ iff $(\mathbb{F},V)\Vdash\alpha\leq\Box\beta$.

$\Rightarrow$: if $(\mathbb{F},V)\Vdash\Diamondblack\alpha\leq\beta$, then for all $w\in W$, if $(\mathbb{F},V),w\Vdash\Diamondblack\alpha$, then $(\mathbb{F},V),w\Vdash\beta$. Our aim is to show that for all $v\in W$, if $(\mathbb{F},V),v\Vdash\alpha$, then $(\mathbb{F},V),v\Vdash\Box\beta$.

Consider any $v\in W$ such that $(\mathbb{F},V),v\Vdash\alpha$. Then for any $u\in W$ such that $(v,u)\in R$, $(\mathbb{F},V),u\Vdash\Diamondblack\alpha$. Since $(\mathbb{F},V)\Vdash\Diamondblack\alpha\leq\beta$, we have that $(\mathbb{F},V),u\Vdash\beta$, so for any $u\in W$ such that $(v,u)\in R$, $(\mathbb{F},V),u\Vdash\beta$, so $(\mathbb{F},V),v\Vdash\Box\beta$.

$\Leftarrow$: if $(\mathbb{F},V)\Vdash\alpha\leq\Box\beta$, then for all $w\in W$, if $(\mathbb{F},V),w\Vdash\alpha$, then $(\mathbb{F},V),w\Vdash\Box\beta$. Our aim is to show that for all $v\in W$, if $(\mathbb{F},V),v\Vdash\Diamondblack\alpha$, then $(\mathbb{F},V),v\Vdash\beta$.

Now assume that $(\mathbb{F},V),v\Vdash\Diamondblack\alpha$. Then there is a $u\in W$ such that $(u,v)\in R$ and $(\mathbb{F},V),u\Vdash\alpha$. By $(\mathbb{F},V)\Vdash\alpha\leq\Box\beta$, we have that $(\mathbb{F},V),u\Vdash\Box\beta$. Therefore, for $v\in W$, we have $(u,v)\in R$, thus $(\mathbb{F},V),v\Vdash\beta$.

\item For the residuation rule for $\lor$, it follows from the validity of $(\alpha\to(\beta\lor\gamma))\leftrightarrow((\alpha\land\neg\beta)\to\gamma)$.

\item For the quantifier scope rules, it follows from the validity of $\exists x\alpha\land\beta\leftrightarrow\exists x(\alpha\land\beta)$ and $(\exists x\alpha\to\beta)\leftrightarrow\forall x(\alpha\to\beta)$ (where $x$ does not occur in $\beta$).

\item For the quantifier exchange rules, it follows from the validity of $\forall P\forall Q\alpha\leftrightarrow\forall Q\forall P\alpha$, $\forall P\forall x\alpha\leftrightarrow\forall x\forall P\alpha$ and $\forall x\forall y\alpha\leftrightarrow\forall y\forall x\alpha$.

\item For the quantifier splitting rules, it follows from the validity of $\forall P(\alpha\to\beta\land\gamma)\leftrightarrow \forall P(\alpha\to\beta)\land\forall P(\alpha\to\gamma)$ and $\forall x(\alpha\to\beta\land\gamma)\leftrightarrow \forall x(\alpha\to\beta)\land\forall x(\alpha\to\gamma)$.

\item For the Ackermann rule, it suffices to show that for any $\mathbb{F}=(W,R)$ and any $V$, 

$\mathbb{F},V\Vdash\forall q(\alpha_{1}\leq\beta_{1}\&\ldots\&\alpha_{n}\leq\beta_{n}\ \&\ \psi_1\leq q\&\ldots\&\psi_m\leq q\Rightarrow \alpha\leq\beta)$\\
iff $\mathbb{F},V\Vdash\alpha_{1}[\bigvee\psi/q]\leq\beta_{1}[\bigvee\psi/q]\&\ldots\&\alpha_{n}[\bigvee\psi/q]\leq\beta_{n}[\bigvee\psi/q]\Rightarrow\alpha[\bigvee\psi/q]\leq\beta[\bigvee\psi/q]$.

$\Rightarrow$: Easy, by instantiation of the propositional quantifier.

$\Leftarrow$: Assmue $\mathbb{F},V\Vdash\alpha_{1}[\bigvee\psi/q]\leq\beta_{1}[\bigvee\psi/q]\&\ldots\&\alpha_{n}[\bigvee\psi/q]\leq\beta_{n}[\bigvee\psi/q]\Rightarrow\alpha[\bigvee\psi/q]\leq\beta[\bigvee\psi/q]$. Then for any $X\subseteq W$, it suffices to show that if $\mathbb{F},V^{q}_{X}\Vdash\alpha_{1}\leq\beta_{1}\&\ldots\&\alpha_{n}\leq\beta_{n}\ \&\ \psi_1\leq q\&\ldots\&\psi_m\leq q$, then $\mathbb{F},V^{q}_{X}\Vdash\alpha\leq\beta$. Now assume $\mathbb{F},V^{q}_{X}\Vdash\alpha_{1}\leq\beta_{1}\&\ldots\&\alpha_{n}\leq\beta_{n}\ \&\ \psi_1\leq q\&\ldots\&\psi_m\leq q$, then $V^{q}_{X}(\alpha_i)\subseteq V^{q}_{X}(\beta_i)$ for $1\leq i\leq n$, and $V^{q}_{X}(\psi_j)\subseteq X$ for $1\leq j\leq m$, therefore $V(\bigvee\psi)=V^{q}_{X}(\bigvee\psi)\subseteq X$. Since each $\alpha_i$ is positive and each $\beta_i$ is negative in $q$, we have that $V(\alpha_i[\bigvee\psi/q])\subseteq V^{q}_{X}(\alpha_i)\subseteq V^{q}_{X}(\beta_i)\subseteq V(\beta_i[\bigvee\psi/q])$, $1\leq i\leq n$, so by $\mathbb{F},V\Vdash\alpha_{1}[\bigvee\psi/q]\leq\beta_{1}[\bigvee\psi/q]\&\ldots\&\alpha_{n}[\bigvee\psi/q]\leq\beta_{n}[\bigvee\psi/q]\Rightarrow\alpha[\bigvee\psi/q]\leq\beta[\bigvee\psi/q]$ we have $V(\alpha[\bigvee\psi/q])\subseteq V(\beta[\bigvee\psi/q])$, therefore by $\alpha$ is negative and $\beta$ is positive in $q$, we have $V^{q}_{X}(\alpha)\subseteq V(\alpha[\bigvee\psi/q])\subseteq V(\beta[\bigvee\psi/q])\subseteq V^{q}_{X}(\beta)$, so $\mathbb{F},V^{q}_{X}\Vdash\alpha\leq\beta$, which concludes the proof.

\item For the packing rule, it follows from the following equivalence: for any $\mathbb{F}=(W,R)$ and any $V$, 

$\mathbb{F},V\Vdash\forall \nomi(\alpha_1\leq\beta_1\&\ldots\&\alpha_n\leq\beta_n\Rightarrow\alpha\leq\beta)$\\
iff for all $w\in W$, $(\mathbb{F},V^{\nomi}_{w})\Vdash\alpha_1\leq\beta_1\&\ldots\&\alpha_n\leq\beta_n\Rightarrow\alpha\leq\beta$\\
iff for all $w\in W$, if $(\mathbb{F},V^{\nomi}_{w})\Vdash\alpha_i\leq\beta_i$ for $1\leq i\leq n$, then $(\mathbb{F},V^{\nomi}_{w})\Vdash\alpha\leq\beta$\\
iff for all $w\in W$, if $(\mathbb{F},V^{\nomi}_{w})\Vdash\mathbf{l}(\alpha_i,\beta_i)$ for $1\leq i\leq n$, then $(\mathbb{F},V^{\nomi}_{w})\Vdash\alpha\leq\beta$\\
iff for all $w\in W$, if $(\mathbb{F},V^{\nomi}_{w})\Vdash\mathbf{l}(\alpha_i,\beta_i)$ for $1\leq i\leq n$, then for all $v\in W$, if $(\mathbb{F},V^{\nomi}_{w}),v\Vdash\alpha$ then $(\mathbb{F},V^{\nomi}_{w}),v\Vdash\beta$\\
iff for all $w,v\in W$, if $(\mathbb{F},V^{\nomi}_{w})\Vdash\mathbf{l}(\alpha_i,\beta_i)$ for $1\leq i\leq n$, then if $(\mathbb{F},V^{\nomi}_{w}),v\Vdash\alpha$ then $(\mathbb{F},V^{\nomi}_{w}),v\Vdash\beta$\\
iff for all $w,v\in W$, if $(\mathbb{F},V^{\nomi}_{w})\Vdash\mathbf{l}(\alpha_i,\beta_i)$ for $1\leq i\leq n$ and $(\mathbb{F},V^{\nomi}_{w}),v\Vdash\alpha$, then $(\mathbb{F},V^{\nomi}_{w}),v\Vdash\beta$\\
iff for all $w,v\in W$, if $(\mathbb{F},V^{\nomi}_{w}),v\Vdash\mathbf{l}(\alpha_i,\beta_i)$ for $1\leq i\leq n$ and $(\mathbb{F},V^{\nomi}_{w}),v\Vdash\alpha$, then $(\mathbb{F},V^{\nomi}_{w}),v\Vdash\beta$\\
iff for all $w,v\in W$, if $(\mathbb{F},V^{\nomi}_{w}),v\Vdash\mathbf{l}(\alpha_1,\beta_1)\land\ldots\land\mathbf{l}(\alpha_n,\beta_n)\land\alpha$, then $(\mathbb{F},V^{\nomi}_{w}),v\Vdash\beta$\\
iff for all $w,v\in W$, if $(\mathbb{F},V^{\nomi}_{w}),v\Vdash\mathbf{l}(\alpha_1,\beta_1)\land\ldots\land\mathbf{l}(\alpha_n,\beta_n)\land\alpha$, then $(\mathbb{F},V),v\Vdash\beta$\\
iff for all $v\in W$, if there exists a $w\in W$ such that $(\mathbb{F},V^{\nomi}_{w}),v\Vdash\mathbf{l}(\alpha_1,\beta_1)\land\ldots\land\mathbf{l}(\alpha_n,\beta_n)\land\alpha$, then $(\mathbb{F},V),v\Vdash\beta$\\
iff for all $v\in W$, if $(\mathbb{F},V),v\Vdash\exists\nomi(\mathbf{l}(\alpha_1,\beta_1)\land\ldots\land\mathbf{l}(\alpha_n,\beta_n)\land\alpha)$, then $(\mathbb{F},V),v\Vdash\beta$\\
iff $\mathbb{F},V\Vdash\exists\nomi(\mathbf{l}(\alpha_{1},\beta_{1})\land\ldots\land\mathbf{l}(\alpha_{n},\beta_{n})\land\alpha)\leq\beta$.
\end{itemize}
\end{proof}

\section{Success of $\mathsf{ALBA}^{\mathsf{SOPML}}$ on $\Pi_n$-Sahlqvist formulas}\label{Sec:Success}

By success of $\mathsf{ALBA}^{\mathsf{SOPML}}$ on $\Pi_n$-Sahlqvist formulas we mean that the algorithm $\mathsf{ALBA}^{\mathsf{SOPML}}$ can transform any input $\Pi_n$-Sahlqvist formula into a pure complex inequality which does not contain any propositional variables or any propositional quantifiers (here we allow nominal quantifiers to occur). We prove this by induction on $n$ that $\mathsf{ALBA}^{\mathsf{SOPML}}$ successfully transforms $\nomi\leq\mathsf{Sahl}_{n}(\vec p)$ into given shapes.

\begin{proposition}\label{Prop:Sahl_1}
In the reduction stage, by running the algorithm $\mathsf{ALBA}^{\mathsf{SOPML}}$, $\nomi\leq\mathsf{Sahl}_{1}(\vec p)$ can be reduced to the following complex inequality:

$$\exists\vec \nomj(\mathsf{NEG}\ \&\ \mathsf{NOM}\ \&\ \mathsf{MinVal})$$

where
\begin{itemize}
\item $\exists\vec \nomj$ is a (possibly empty) bunch of nominal quantifiers;
\item $\mathsf{NEG}$ is a (possibly empty) meta-conjunction of inequalities of the form $\nomj\leq\neg\mathsf{POS}(\vec p)$, where $\nomj$ is either $\nomi$ or in $\vec \nomj$,
\item $\mathsf{NOM}$ is a (possibly empty) meta-conjunction of inequalities of the form $\nomj\leq\Diamond\nomk$, where $\nomj,\nomk$ are either $\nomi$ or in $\vec \nomj$,
\item $\mathsf{MinVal}$ is a (possibly empty) meta-conjunction of inequalities of the form $\psi\leq p$, where $\psi$ is pure and $p$ is in $\vec p$.
\end{itemize}
\end{proposition}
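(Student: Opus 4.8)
The plan is to prove the claim by induction on the structure of the $\Pi_1$-Sahlqvist antecedent $\mathsf{Sahl}_1(\vec p)$, tracking the shape of the complex inequality produced by the reduction stage. Recall that $\mathsf{Sahl}_1(\vec p)$ is built from the base cases $\Box^n p$, $\bot$, $\top$, $\neg\mathsf{POS}(\vec p)$, via $\land$ and $\Diamond$. I would set up the induction so that the statement being proved is exactly: for every nominal $\nomi$ and every $\mathsf{Sahl}_1(\vec p)$-formula $\alpha$, the algorithm reduces $\nomi\leq\alpha$ to a complex inequality of the form $\exists\vec\nomj(\mathsf{NEG}\ \&\ \mathsf{NOM}\ \&\ \mathsf{MinVal})$ of the stated shape, with the property that all nominals appearing in $\mathsf{NEG}$ and $\mathsf{NOM}$ are either $\nomi$ or among the freshly quantified $\vec\nomj$.

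\textbf{Base cases.} For $\alpha = \Box^n p$: applying $(Res\text{-}\Box)$ $n$ times to $\nomi\leq\Box^n p$ yields $\Diamondblack^n\nomi\leq p$; here $\Diamondblack^n\nomi$ is pure, so this is a single $\mathsf{MinVal}$-inequality $\psi\leq p$, with empty $\mathsf{NEG}$, $\mathsf{NOM}$, and empty $\exists\vec\nomj$. (One should check the residuation rule $(Res\text{-}\Box)$ is applicable inside the $\leq$ even when the outermost connective is $\Box$; this is immediate from the rule as stated.) For $\alpha = \neg\mathsf{POS}(\vec p)$: the inequality $\nomi\leq\neg\mathsf{POS}(\vec p)$ is already of the form of a single $\mathsf{NEG}$-conjunct, again with everything else empty. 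For $\alpha=\bot$, $\nomi\leq\bot$ — one can treat this as the degenerate case (e.g.\ absorb it, or note $\nomi\leq\bot$ is already pure), and $\alpha=\top$ gives $\nomi\leq\top$, a trivially valid pure inequality which may simply be dropped. These degenerate cases need a small remark to fit the template (all three meta-conjunctions empty is permitted since they are "possibly empty").

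\textbf{Inductive cases.} For $\alpha = \alpha_1\land\alpha_2$: apply $(Spl\text{-}Nom)$ to $\nomi\leq\alpha_1\land\alpha_2$, obtaining $\nomi\leq\alpha_1\ \&\ \nomi\leq\alpha_2$; apply the induction hypothesis to each conjunct, producing $\exists\vec\nomj_1(\mathsf{NEG}_1\ \&\ \mathsf{NOM}_1\ \&\ \mathsf{MinVal}_1)$ and similarly for the second, where we may assume (by renaming) that $\vec\nomj_1,\vec\nomj_2$ are disjoint and fresh; then use the $(Scope\text{-}\&)$ rule together with commutativity/associativity of $\&$ to pull all the existential nominal quantifiers to the front, merging into $\exists\vec\nomj_1\vec\nomj_2(\mathsf{NEG}_1\ \&\ \mathsf{NEG}_2\ \&\ \mathsf{NOM}_1\ \&\ \mathsf{NOM}_2\ \&\ \mathsf{MinVal}_1\ \&\ \mathsf{MinVal}_2)$, which has the required shape. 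For $\alpha = \Diamond\beta$: apply $(Approx\text{-}Nom)$ to $\nomi\leq\Diamond\beta$, introducing a fresh nominal $\nomk$ and obtaining $\exists\nomk(\nomk\leq\beta\ \&\ \nomi\leq\Diamond\nomk)$; apply the induction hypothesis to $\nomk\leq\beta$, yielding $\exists\vec\nomj(\mathsf{NEG}\ \&\ \mathsf{NOM}\ \&\ \mathsf{MinVal})$, and note that $\nomi\leq\Diamond\nomk$ is exactly a new $\mathsf{NOM}$-conjunct whose nominals $\nomi,\nomk$ are admissible (one is $\nomi$, the other is the freshly quantified $\nomk$, which joins $\vec\nomj$); pull $\exists\nomk$ and $\exists\vec\nomj$ to the front via $(Scope\text{-}\&)$, commutativity, and associativity. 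The one bookkeeping point is that the induction hypothesis applied at $\nomk\leq\beta$ may itself have nominals equal to $\nomk$ (the "head" nominal of that sub-derivation) — this is fine, since $\nomk$ is among the bound nominals after we prepend $\exists\nomk$.

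\textbf{Main obstacle.} I expect the genuine work to be the freshness/scoping bookkeeping: verifying that the fresh nominals introduced by $(Approx\text{-}Nom)$ in different branches can always be chosen distinct and not clashing with anything occurring earlier (which the side condition of $(Approx\text{-}Nom)$ guarantees at each single application, but one must argue it persists through the induction), and that the quantifier-scoping rules $(Scope\text{-}\&)$ plus commutativity/associativity of $\&$ genuinely suffice to bring every $\exists\nomj$ to the front in the presence of the nested structure — in particular that $\mathsf{Comp}_2$ in $(Scope\text{-}\&)$ can always be arranged to be $\nomj$-free by the disjointness of the fresh-nominal pools. A secondary subtlety worth flagging is that the rule $(Res\text{-}\Box)$ must be applied under the hypothesis that the left-hand side is a nominal-like pure term so that $\Diamondblack^n\nomi$ stays pure; since nominals and $\Diamondblack$ are both in the pure fragment, this holds. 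Apart from that, each step is a direct application of a named rule from Stage 2, so once the induction invariant is stated precisely the argument is routine.
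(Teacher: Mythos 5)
Your proposal is correct and follows essentially the same route as the paper's own proof: induction on the structure of $\mathsf{Sahl}_1(\vec p)$, with $(Res\text{-}\Box)$ for $\Box^n p$, the $\neg\mathsf{POS}(\vec p)$ case landing directly in $\mathsf{NEG}$, and $(Spl\text{-}Nom)$, $(Approx\text{-}Nom)$, $(Scope\text{-}\&)$ plus commutativity/associativity handling the $\land$ and $\Diamond$ cases. Your additional attention to the freshness and quantifier-scoping bookkeeping is a welcome refinement of details the paper leaves implicit, but it does not change the argument.
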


\begin{proof}
We prove by induction on the formula complexity of $\mathsf{Sahl}_{1}(\vec p)$. 

\begin{itemize}
\item For the case where $\mathsf{Sahl}_{1}(\vec p)=\bot,\top$, trivial.
\item For the case where $\mathsf{Sahl}_{1}(\vec p)=\Box^{n}p$, by applying the residuation rule for $\Box$, we get $\Diamondblack^{n}\nomi\leq p$, which belongs to $\mathsf{MinVal}$.
\item For the case where $\mathsf{Sahl}_{1}(\vec p)=\neg\mathsf{POS}(\vec p)$, it already belongs to $\mathsf{NEG}$.
\item For the case where $\mathsf{Sahl}_{1}(\vec p)=\mathsf{Sahl}^{a}_{1}(\vec p)\land\mathsf{Sahl}^{b}_{1}(\vec p)$, we first apply (Spl-Nom) to $\nomi\leq\mathsf{Sahl}_{1}(\vec p)$ and get $\nomi\leq\mathsf{Sahl}^{a}_{1}(\vec p)$ and $\nomi\leq\mathsf{Sahl}^{b}_{1}(\vec p)$. Then we apply the induction hypothesis and get $$\exists\vec \nomj^{a}(\mathsf{NEG}^{a}\ \&\ \mathsf{NOM}^{a}\ \&\ \mathsf{MinVal}^{a})\ \&\ \exists\vec \nomj^{b}(\mathsf{NEG}^{b}\ \&\ \mathsf{NOM}^{b}\ \&\ \mathsf{MinVal}^{b}).$$ By applying the (Scope-$\&$) rule and commutativity and associativity rules for $\&$, we get the desired shape.
\item For the case where $\mathsf{Sahl}_{1}(\vec p)=\Diamond\mathsf{Sahl}^{a}_{1}(\vec p)$, we first apply (Approx-Nom) for $\Diamond$ and get $\exists\nomk(\nomk\leq\mathsf{Sahl}^{a}_{1}(\vec p)\ \&\ \nomi\leq\Diamond\nomk)$. Then we apply the induction hypothesis to $\nomk\leq\mathsf{Sahl}^{a}_{1}(\vec p)$ and get $$\exists\nomk(\exists\vec \nomj(\mathsf{NEG}\ \&\ \mathsf{NOM}\ \&\ \mathsf{MinVal})\ \&\ \nomi\leq\Diamond\nomk).$$ By applying the (Scope-$\&$) rule and commutativity and associativity rules for $\&$, we get the desired shape ($\nomi\leq\Diamond\nomk$ is merged into $\mathsf{NOM}$).
\end{itemize}
\end{proof}

\begin{proposition}\label{Prop:PIA}
In the reduction stage, by running the algorithm $\mathsf{ALBA}^{\mathsf{SOPML}}$, for any formula $\psi$ such that 
\begin{itemize}
\item $\psi$ contains no propositional quantifiers;
\item $\psi$ contains propositional variables at most from $\vec q$;
\item all occurrences of $\vec q$-variables are negative in $\psi$; 
\end{itemize}
$\psi\leq\mathsf{PIA}(\vec q,\vec p)$ can be reduced to the following complex inequality:
$$\mathsf{RelMinVal}(\vec q, \vec p))$$
where $\mathsf{RelMinVal}(\vec q,\vec p)$\footnote{Here $\mathsf{RelMinVal}$ means relative minimal valuation.} is a meta-conjunction of inequalities of the form $\phi\leq p$, $\phi$ has the three properties for $\psi$ stated above, and $p$ is in $\vec p$.

Especially, this proposition holds for $\psi=\nomi$.
\end{proposition}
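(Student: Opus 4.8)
The plan is to prove the statement by induction on the syntactic structure of the $\mathsf{PIA}$-formula $\mathsf{PIA}(\vec q,\vec p)$, using its grammar
\[
\mathsf{PIA}(\vec q, \vec p)::=p\mid\Box\mathsf{PIA}(\vec q, \vec p)\mid \mathsf{PIA}(\vec q, \vec p)\land\mathsf{PIA}(\vec q, \vec p)\mid \mathsf{POS}(\vec q)\lor\mathsf{PIA}(\vec q, \vec p).
\]
Crucially, the statement is to be proved uniformly over \emph{all} left-hand sides $\psi$ satisfying the three bullet conditions (no propositional quantifiers; propositional variables at most among $\vec q$; all $\vec q$-occurrences negative), so that at each inductive step we may reinvoke the induction hypothesis on a modified left-hand side. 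In the base case $\mathsf{PIA}(\vec q,\vec p)=p$ with $p$ in $\vec p$, the inequality $\psi\leq p$ is already of the required form, with witness $\phi:=\psi$, which has the three properties by assumption. The special case $\psi=\nomi$ needs no separate argument: a nominal contains no propositional quantifiers, no propositional variables at all, and hence vacuously meets the polarity condition.

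For the inductive steps: if $\mathsf{PIA}(\vec q,\vec p)=\Box\mathsf{PIA}'(\vec q,\vec p)$, apply the residuation rule (Res-$\Box$) to $\psi\leq\Box\mathsf{PIA}'(\vec q,\vec p)$ to get $\Diamondblack\psi\leq\mathsf{PIA}'(\vec q,\vec p)$; since $\Diamondblack$ introduces no propositional quantifiers and preserves the polarity of every subformula occurrence, $\Diamondblack\psi$ again satisfies the three properties, so the induction hypothesis yields $\mathsf{RelMinVal}(\vec q,\vec p)$. If $\mathsf{PIA}(\vec q,\vec p)=\mathsf{PIA}^a\land\mathsf{PIA}^b$, apply the splitting rule (Splitting) to $\psi\leq\mathsf{PIA}^a\land\mathsf{PIA}^b$, obtaining $\psi\leq\mathsf{PIA}^a\ \&\ \psi\leq\mathsf{PIA}^b$; apply the induction hypothesis to each conjunct and use the commutativity and associativity rules for $\&$ to flatten the result $\mathsf{RelMinVal}^a\ \&\ \mathsf{RelMinVal}^b$ into a single meta-conjunction of inequalities of the form $\phi\leq p$.

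The one step that carries the real content is $\mathsf{PIA}(\vec q,\vec p)=\mathsf{POS}(\vec q)\lor\mathsf{PIA}'(\vec q,\vec p)$: apply the residuation rule (Res-$\lor$) to $\psi\leq\mathsf{POS}(\vec q)\lor\mathsf{PIA}'(\vec q,\vec p)$ to obtain $\psi\land\neg\mathsf{POS}(\vec q)\leq\mathsf{PIA}'(\vec q,\vec p)$, and then reinvoke the induction hypothesis on the new left-hand side $\psi\land\neg\mathsf{POS}(\vec q)$. The point is that this enlarged antecedent still satisfies the three properties: it has no propositional quantifiers (recall $\mathsf{POS}(\vec q)$ is by definition quantifier-free), its propositional variables lie among $\vec q$ (those of $\psi$ do, and $\mathsf{POS}(\vec q)$ contains only $\vec q$-variables), and every occurrence of a $\vec q$-variable in it is negative — in $\psi$ by assumption, and in $\neg\mathsf{POS}(\vec q)$ because the grammar for $\mathsf{POS}(\vec q)$ places every $\vec q$-occurrence positively, so that prefixing a negation turns all of them negative. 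Hence the induction hypothesis applies and yields $\mathsf{RelMinVal}(\vec q,\vec p)$. The main obstacle is precisely this bookkeeping of the three-fold invariant under the rules, and especially the verification in the $\lor$-case that negating $\mathsf{POS}(\vec q)$ flips every $\vec q$-occurrence to negative polarity; once that polarity invariant is in hand, the induction closes routinely.
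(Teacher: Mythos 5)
Your proposal is correct and follows essentially the same route as the paper: induction on the structure of $\mathsf{PIA}(\vec q,\vec p)$, using (Res-$\Box$), (Splitting), and (Res-$\lor$) in the respective cases, with the key observation that the enlarged antecedent $\psi\land\neg\mathsf{POS}(\vec q)$ still satisfies the three-fold invariant. Your write-up is in fact somewhat more explicit than the paper's about why the polarity condition is preserved in the $\lor$-case.
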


\begin{proof}
We prove by induction on the complexity of $\mathsf{PIA}(\vec q,\vec p)$.
\begin{itemize}
\item For the basic case where $\mathsf{PIA}(\vec q,\vec p)=p$, trivial.
\item For the case where $\mathsf{PIA}(\vec q,\vec p)=\Box\mathsf{PIA}^{a}(\vec q,\vec p)$, we first apply the (Res-$\Box$) rule and get $\Diamondblack\psi\leq\mathsf{PIA}^{a}(\vec q,\vec p)$. Then by induction hypothesis, it is transformed into $\mathsf{RelMinVal}(\vec q, \vec p))$ of the required shape.
\item For the case where $\mathsf{PIA}(\vec q,\vec p)=\mathsf{PIA}^{a}(\vec q,\vec p)\land\mathsf{PIA}^{b}(\vec q,\vec p)$, we first apply (Splitting) and get $\psi\leq\mathsf{PIA}^{a}(\vec q,\vec p)$ and $\psi\leq\mathsf{PIA}^{b}(\vec q,\vec p)$. Then by induction hypothesis, these two inequalities can be transformed into $\mathsf{RelMinVal}^{a}(\vec q, \vec p))$ and $\mathsf{RelMinVal}^{b}(\vec q, \vec p))$ of the required shape, which put together is also of the required shape.
\item For the case where $\mathsf{PIA}(\vec q,\vec p)=\mathsf{POS}(\vec q)\lor\mathsf{PIA}^{a}(\vec q,\vec p)$, by applying (Res-$\lor$), we get $\psi\land\neg\mathsf{POS}(\vec q)\leq \mathsf{PIA}^{a}(\vec q,\vec p)$. Then $\psi\land\neg\mathsf{POS}(\vec q)$ satisfies the conditions required in the proposition, so we can apply the induction hypothesis and get the $\mathsf{RelMinVal}(\vec q, \vec p))$ of the required shape.
\end{itemize}
\end{proof}

\begin{proposition}\label{Prop:Sahl_1_to_PIA}
In the reduction stage, by running the algorithm $\mathsf{ALBA}^{\mathsf{SOPML}}$, $\nomi\leq\forall\vec q(\mathsf{Sahl}_1(\vec q)\to\mathsf{PIA}(\vec q,\vec p))$ can be reduced to the following complex inequality:
$$\forall\vec \nomj(\mathsf{PURE}\ \Rightarrow\ \mathsf{MinVal}(\vec p))$$
where 
\begin{itemize}
\item $\mathsf{PURE}$ is a meta-conjunction of pure inequalities,
\item $\mathsf{MinVal}(\vec p)$ is a meta-conjunction of inequalities of the form $\psi\leq p$, where $\psi$ is pure and $p$ is in $\vec p$.
\end{itemize}
Therefore, $\nomi\leq\forall\vec q(\mathsf{Sahl}_1(\vec q)\to\mathsf{PIA}(\vec q,\vec p))$ can be reduced to the form 
$$\mathsf{MinVal}(\vec p),$$
where $\mathsf{MinVal}(\vec p)$ is a meta-conjunction of inequalities of the form $\psi\leq p$, where $\psi$ is pure and $p$ is in $\vec p$.
\end{proposition}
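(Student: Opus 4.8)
The plan is to strip the outer connectives, \emph{reduce the antecedent before the consequent}, eliminate the $\vec q$ by the Ackermann rule, and only then reduce the residual $\mathsf{PIA}$-part. First I would apply (Quant-Nom) repeatedly and then (Sep-Nom), bringing the complex inequality to
$$\forall\vec q\big(\nomi\leq\mathsf{Sahl}_{1}(\vec q)\ \Rightarrow\ \nomi\leq\mathsf{PIA}(\vec q,\vec p)\big).$$
Now I would apply Proposition \ref{Prop:Sahl_1} (with $\vec q$ in the role of $\vec p$) to the antecedent $\nomi\leq\mathsf{Sahl}_{1}(\vec q)$, obtaining
$$\forall\vec q\big(\exists\vec\nomj(\mathsf{NEG}(\vec q)\ \&\ \mathsf{NOM}\ \&\ \mathsf{MinVal}(\vec q))\ \Rightarrow\ \nomi\leq\mathsf{PIA}(\vec q,\vec p)\big),$$
where $\mathsf{NEG}(\vec q)$ is a meta-conjunction of inequalities $\nomk\leq\neg\mathsf{POS}(\vec q)$, $\mathsf{NOM}$ is pure, $\mathsf{MinVal}(\vec q)$ is a meta-conjunction of inequalities $\xi\leq q$ with $\xi$ pure and $q$ in $\vec q$, and the $\vec\nomj$ are fresh. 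Since the $\vec\nomj$ do not occur in $\nomi\leq\mathsf{PIA}(\vec q,\vec p)$, I would use (Scope-$\Rightarrow$) to turn the inner $\exists\vec\nomj$ into a leading $\forall\vec\nomj$, and the exchange rules to move the $\vec q$-quantifiers inward past $\forall\vec\nomj$:
$$\forall\vec\nomj\forall\vec q\big(\mathsf{NEG}(\vec q)\ \&\ \mathsf{NOM}\ \&\ \mathsf{MinVal}(\vec q)\ \Rightarrow\ \nomi\leq\mathsf{PIA}(\vec q,\vec p)\big).$$

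The crucial point is that the consequent is a \emph{single} inequality and that this complex inequality is in Ackermann shape with respect to each $q$ in $\vec q$: in the antecedent, $q$ occurs negatively inside a $\neg\mathsf{POS}(\vec q)$ of $\mathsf{NEG}(\vec q)$, does not occur in the pure part $\mathsf{NOM}$, and occurs only as the right-hand side of the inequalities $\xi\leq q$ of $\mathsf{MinVal}(\vec q)$ (whose left-hand sides are pure, hence $q$-free); in the consequent, $q$ occurs positively, since in the $\mathsf{PIA}$-grammar the variables $\vec q$ appear only inside $\mathsf{POS}(\vec q)$-slots and under no negation. Hence I would apply the Ackermann rule once for each $q$ in $\vec q$ (innermost first), substituting the pure formula $\bigvee\{\xi\mid(\xi\leq q)\in\mathsf{MinVal}(\vec q)\}$ for $q$. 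As every substituted formula is pure (and so $q'$-free for the remaining $q'$), the Ackermann shape is preserved through the successive eliminations, and all of $\vec q$ is discharged, leaving
$$\forall\vec\nomj\big(\mathsf{NEG}'\ \&\ \mathsf{NOM}\ \Rightarrow\ \nomi\leq\Theta(\vec p)\big),$$
where $\mathsf{NEG}'$ is pure (each $\neg\mathsf{POS}(\vec q)$ has become the negation of a pure formula) and $\Theta(\vec p)$ is $\mathsf{PIA}(\vec q,\vec p)$ with each $\mathsf{POS}(\vec q)$-slot replaced by a pure formula $\gamma$; so $\Theta(\vec p)$ is built from $\vec p$-variables (all occurring positively), $\Box$, $\land$, and $\gamma\lor(\cdot)$ with $\gamma$ pure.

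I would then reduce the sub-inequality $\nomi\leq\Theta(\vec p)$ by exactly the steps in the proof of Proposition \ref{Prop:PIA} — (Res-$\Box$) on the $\Box$-slots, (Res-$\lor$) on the $\gamma\lor(\cdot)$-slots, (Splitting) — the only difference being that now the $\gamma$'s and the accumulating left-hand sides are \emph{pure} rather than merely negative in $\vec q$; this reduces $\nomi\leq\Theta(\vec p)$ to a meta-conjunction $\mathsf{MinVal}(\vec p)$ of inequalities $\psi\leq p$ with $\psi$ pure and $p$ in $\vec p$, sitting inside the consequent, i.e.\ to
$$\forall\vec\nomj\big(\,\underbrace{\mathsf{NEG}'\ \&\ \mathsf{NOM}}_{\mathsf{PURE}}\ \Rightarrow\ \mathsf{MinVal}(\vec p)\,\big),$$
which is the first displayed form. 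For the second form I would split this consequent out from under the $\forall\vec\nomj$ with (Spl-Quant-$\nomi$) and apply the packing rule to each conjunct $\forall\vec\nomj(\mathsf{PURE}\Rightarrow\psi\leq p)$, discharging the whole block $\forall\vec\nomj$ (one nominal at a time, using the $n=0$ instance of the packing rule once only a bare inequality remains under a single nominal quantifier — legitimate because $p$ contains no nominals), which turns it into $\exists\vec\nomj(\mathbf{l}(\ldots)\land\ldots\land\psi)\leq p$ with pure left-hand side; the meta-conjunction of these inequalities is the required $\mathsf{MinVal}(\vec p)$.

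The step needing the most care is verifying that the Ackermann rule genuinely applies for each $q$ in $\vec q$: one must check that $\mathsf{PIA}(\vec q,\vec p)$ is syntactically positive in $\vec q$, that each $q$ occurs in the antecedent only negatively (inside $\neg\mathsf{POS}(\vec q)$) or as the right-hand side of a minimal-valuation inequality with pure left-hand side, and that the purity of the substituted formulas keeps the complex inequality in Ackermann shape across the successive eliminations. The remaining ingredients — the exchange-rule reordering that makes $\vec q$ innermost, and the iterated use of (Spl-Quant-$\nomi$) and of the packing rule over the nominal block — are routine bookkeeping.
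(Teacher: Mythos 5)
Your proof is correct, but it takes a genuinely different route from the paper's: you swap the order of the two key steps. The paper first reduces the consequent $\nomi\leq\mathsf{PIA}(\vec q,\vec p)$ via Proposition \ref{Prop:PIA} into $\mathsf{RelMinVal}(\vec q,\vec p)$ (a meta-conjunction of $\phi\leq p$ with $\phi$ \emph{negative in} $\vec q$), and only then applies the Ackermann rule for each $q\in\vec q$; the negativity condition tracked by Proposition \ref{Prop:PIA} is exactly what makes those conjuncts fit the $\alpha\leq\beta$ slot of the Ackermann rule. You instead apply the Ackermann rule first, exploiting the dual observation that $\mathsf{PIA}(\vec q,\vec p)$ itself is syntactically positive in $\vec q$ (all $\vec q$-occurrences sit inside $\mathsf{POS}(\vec q)$-disjuncts under $\Box,\land,\lor$ only), and then re-run the $(Res\mbox{-}\Box)$/$(Res\mbox{-}\lor)$/(Splitting) reduction on the substituted formula $\Theta(\vec p)$. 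Both orders rely on the same positivity facts and both land on $\forall\vec\nomj(\mathsf{PURE}\Rightarrow\mathsf{MinVal}(\vec p))$. Your order buys a cleaner Ackermann application (the consequent is a single inequality, so no prior splitting of the consequent under $\forall q$ is needed, whereas the paper's $\mathsf{RelMinVal}$ consequent is a meta-conjunction and strictly speaking needs $(Spl\mbox{-}Quant\mbox{-}p)$ first); the cost is that you cannot literally cite Proposition \ref{Prop:PIA} for the post-substitution formula $\Theta(\vec p)$, since its disjuncts are arbitrary pure formulas rather than $\mathsf{POS}(\vec q)$-formulas, so you must observe (as you do) that the reduction steps are insensitive to the internal structure of those disjuncts. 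Your care over the iterated Ackermann applications (purity of the substituted $\bigvee\xi$ preserving the shape for the remaining variables) and over the degenerate $n=0$ instance of the packing rule is at least as explicit as the paper's own treatment.
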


\begin{proof}
We first apply (Quant-Nom) on $$\nomi\leq\forall\vec q(\mathsf{Sahl}_1(\vec q)\to\mathsf{PIA}(\vec q,\vec p)),$$ then apply (Sep-Nom) we get $$\forall\vec q(\nomi\leq\mathsf{Sahl}_1(\vec q)\Rightarrow\nomi\leq\mathsf{PIA}(\vec q,\vec p)).$$ By Proposition \ref{Prop:Sahl_1}, we have 

$$\forall\vec q(\exists\vec \nomj(\mathsf{NEG}\ \&\ \mathsf{NOM}\ \&\ \mathsf{MinVal})\Rightarrow\nomi\leq\mathsf{PIA}(\vec q,\vec p)).$$

By Proposition \ref{Prop:PIA}, we have 

$$\forall\vec q(\exists\vec \nomj(\mathsf{NEG}\ \&\ \mathsf{NOM}\ \&\ \mathsf{MinVal})\Rightarrow\mathsf{RelMinVal}(\vec q, \vec p)).$$

Then by applying (Scope-$\Rightarrow$) and repeatedly applying (Ex-$\nomi p$), we have 

$$\forall\vec \nomj\forall\vec q(\mathsf{NEG}\ \&\ \mathsf{NOM}\ \&\ \mathsf{MinVal}\Rightarrow\mathsf{RelMinVal}(\vec q, \vec p)).$$

Then by applying the Ackermann rule for each propositional variable in $\vec q$, $\mathsf{NEG}$ receives the minimal valuation from $\mathsf{MinVal}$ and become a meta-conjunction of pure inequalities, $\mathsf{NOM}$ remains pure, $\mathsf{MinVal}$ disappears, and $\mathsf{RelMinVal}(\vec q, \vec p))$ becomes a meta-conjunction of inequalities of the form $\psi\leq p$ where $\psi$ is pure and $p$ is in $\vec p$. Now what we have is the following shape, as required by the proposition:
$$\forall\vec \nomj(\mathsf{PURE}\Rightarrow\mathsf{MinVal}(\vec p)).$$
Then apply (Spl-Quant-$\nomi$) and the packing rule, one get a complex inequality $\mathsf{MinVal}(\vec p)$ of the required form.

\end{proof}

\begin{proposition}\label{Prop:Sahl_2}
In the reduction stage, by running the algorithm $\mathsf{ALBA}^{\mathsf{SOPML}}$, $\nomi\leq\mathsf{Sahl}_{2}(\vec p)$ can be reduced to the following complex inequality:
$$\exists\vec \nomj(\mathsf{NEG}\ \&\ \mathsf{NOM}\ \&\ \mathsf{MinVal})$$
where $\exists\vec \nomj$, $\mathsf{NEG}$, $\mathsf{NOM}$, $\mathsf{MinVal}$ are described as in Proposition \ref{Prop:Sahl_1}.
\end{proposition}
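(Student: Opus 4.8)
The plan is to prove this by induction on the structure of $\mathsf{Sahl}_{2}(\vec p)$, following the same pattern as the proof of Proposition~\ref{Prop:Sahl_1}, with the two genuinely new clauses in the grammar of $\mathsf{Sahl}_2$ handled by appealing to the already-established Propositions~\ref{Prop:Sahl_1} and~\ref{Prop:Sahl_1_to_PIA}.

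For the base clauses: if $\mathsf{Sahl}_{2}(\vec p)=\mathsf{Sahl}_{1}(\vec p)$, then Proposition~\ref{Prop:Sahl_1} already reduces $\nomi\leq\mathsf{Sahl}_{1}(\vec p)$ to a complex inequality of exactly the target shape $\exists\vec\nomj(\mathsf{NEG}\ \&\ \mathsf{NOM}\ \&\ \mathsf{MinVal})$, so nothing further is needed. If $\mathsf{Sahl}_{2}(\vec p)=\forall\vec q(\mathsf{Sahl}_1(\vec q)\to\mathsf{PIA}(\vec q,\vec p))$, then Proposition~\ref{Prop:Sahl_1_to_PIA} reduces $\nomi\leq\mathsf{Sahl}_{2}(\vec p)$ all the way to a meta-conjunction $\mathsf{MinVal}(\vec p)$ of inequalities $\psi\leq p$ with $\psi$ pure and $p$ in $\vec p$; this is a degenerate instance of the target shape, with the nominal-quantifier prefix $\exists\vec\nomj$, the $\mathsf{NEG}$ part and the $\mathsf{NOM}$ part all empty.

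For the inductive clauses I would argue exactly as in Proposition~\ref{Prop:Sahl_1}. If $\mathsf{Sahl}_{2}(\vec p)=\mathsf{Sahl}_{2}^{a}(\vec p)\land\mathsf{Sahl}_{2}^{b}(\vec p)$, first apply (Spl-Nom) to $\nomi\leq\mathsf{Sahl}_{2}(\vec p)$ to obtain $\nomi\leq\mathsf{Sahl}_{2}^{a}(\vec p)\ \&\ \nomi\leq\mathsf{Sahl}_{2}^{b}(\vec p)$, apply the induction hypothesis to each conjunct (after renaming so that the two blocks of existentially quantified nominals are disjoint), and then use (Scope-$\&$) together with the commutativity and associativity rules for $\&$ to hoist all the existential nominal quantifiers to the front and regroup the inequalities into a single $\mathsf{NEG}$, a single $\mathsf{NOM}$ and a single $\mathsf{MinVal}$. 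If $\mathsf{Sahl}_{2}(\vec p)=\Diamond\mathsf{Sahl}_{2}^{a}(\vec p)$, apply (Approx-Nom) to get $\exists\nomk(\nomk\leq\mathsf{Sahl}_{2}^{a}(\vec p)\ \&\ \nomi\leq\Diamond\nomk)$ with $\nomk$ fresh, apply the induction hypothesis to $\nomk\leq\mathsf{Sahl}_{2}^{a}(\vec p)$, and again use (Scope-$\&$) plus commutativity and associativity to reach the target shape, absorbing the new atom $\nomi\leq\Diamond\nomk$ into $\mathsf{NOM}$.

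The induction itself is routine; the real content has been pushed into Propositions~\ref{Prop:Sahl_1} and~\ref{Prop:Sahl_1_to_PIA}, so the only thing needing care here is checking that their outputs are \emph{syntactically} of the advertised forms --- a $\mathsf{Sahl}_1$-conjunct contributing a block $\exists\vec\nomj(\mathsf{NEG}\ \&\ \mathsf{NOM}\ \&\ \mathsf{MinVal})$ and a $\forall\vec q(\mathsf{Sahl}_1(\vec q)\to\mathsf{PIA}(\vec q,\vec p))$-conjunct contributing a pure quantifier-free block $\mathsf{MinVal}(\vec p)$ --- together with the (trivial) observation that both shapes are instances of $\exists\vec\nomj(\mathsf{NEG}\ \&\ \mathsf{NOM}\ \&\ \mathsf{MinVal})$, so that the $\&$-regrouping step is legitimate. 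The one place I expect to have to be careful is the bookkeeping for nominal quantifiers: ensuring that the nominals introduced by (Approx-Nom) in distinct branches are pairwise distinct and distinct from $\nomi$, and that no application of (Scope-$\&$) captures a free nominal; both are arranged by systematic renaming before each merge.
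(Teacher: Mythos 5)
Your proposal is correct and follows essentially the same route as the paper: induction on the structure of $\mathsf{Sahl}_2(\vec p)$, dispatching the $\mathsf{Sahl}_1$ and $\forall\vec q(\mathsf{Sahl}_1(\vec q)\to\mathsf{PIA}(\vec q,\vec p))$ clauses via Propositions~\ref{Prop:Sahl_1} and~\ref{Prop:Sahl_1_to_PIA} (the latter's output being a degenerate instance of the target shape, obtained in the paper by explicitly invoking (Spl-Quant-$\nomi$) and the packing rule, which your citation of the full conclusion of Proposition~\ref{Prop:Sahl_1_to_PIA} already subsumes), and handling the $\land$ and $\Diamond$ clauses exactly as in Proposition~\ref{Prop:Sahl_1}. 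Your extra remarks on freshness and renaming of nominals are sound housekeeping that the paper leaves implicit.
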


\begin{proof}
We prove by induction on the complexity of $\mathsf{Sahl}_{2}(\vec p)$.
\begin{itemize}
\item For the case where $\mathsf{Sahl}_{2}(\vec p)=\mathsf{Sahl}_{1}(\vec p)$, see Proposition \ref{Prop:Sahl_1}.
\item For the case where $\mathsf{Sahl}_{2}(\vec p)=\forall\vec q(\mathsf{Sahl}_1(\vec q)\to\mathsf{PIA}(\vec q,\vec p))$, by Proposition \ref{Prop:Sahl_1_to_PIA}, $\nomi\leq\forall\vec q(\mathsf{Sahl}_1(\vec q)\to\mathsf{PIA}(\vec q,\vec p))$ is reduced to $\forall\vec \nomj(\mathsf{PURE}\ \Rightarrow\ \mathsf{MinVal}(\vec p))$. Now apply (Spl-Quant-$\nomi$) and the packing rule, we have a meta-conjunction of inequalities of the form $\phi\leq p$ where $\phi$ is pure and $p$ is in $\vec p$, so it belongs to $\mathsf{MinVal}$.
\item For the case where $\mathsf{Sahl}_{2}(\vec p)=\mathsf{Sahl}^{a}_{2}(\vec p)\land\mathsf{Sahl}^{b}_{2}(\vec p)$, similar to the $\mathsf{Sahl}_{1}(\vec p)=\mathsf{Sahl}^{a}_{1}(\vec p)\land\mathsf{Sahl}^{b}_{1}(\vec p)$ case in the proof of Proposition \ref{Prop:Sahl_1}.
\item For the case where $\mathsf{Sahl}_{2}(\vec p)=\Diamond\mathsf{Sahl}^{a}_{2}(\vec p)$, similar to the $\mathsf{Sahl}_{1}(\vec p)=\Diamond\mathsf{Sahl}^{a}_{1}(\vec p)$ case in the proof of Proposition \ref{Prop:Sahl_1}.
\end{itemize}
\end{proof}

\begin{proposition}\label{Prop:Sahl_n}
In the reduction stage, by running the algorithm $\mathsf{ALBA}^{\mathsf{SOPML}}$, $\nomi\leq\mathsf{Sahl}_{n}(\vec p)$ can be reduced to the following complex inequality:
$$\exists\vec \nomj(\mathsf{NEG}\ \&\ \mathsf{NOM}\ \&\ \mathsf{MinVal})$$
where $\exists\vec \nomj$, $\mathsf{NEG}$, $\mathsf{NOM}$, $\mathsf{MinVal}$ are described as in Proposition \ref{Prop:Sahl_1}.
\end{proposition}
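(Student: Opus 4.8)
The plan is to prove Proposition \ref{Prop:Sahl_n} by induction on $n$. The case $n=1$ is Proposition \ref{Prop:Sahl_1} and the case $n=2$ is Proposition \ref{Prop:Sahl_2}, so for $n\geq 3$ I would assume, as outer induction hypothesis, that the statement holds for $\mathsf{Sahl}_{n-1}$ (and, since the statement is uniform in the bunch of propositional variables, that it applies equally to $\mathsf{Sahl}_{n-1}(\vec q)$). Given this, I would run an inner induction on the formula complexity of $\mathsf{Sahl}_{n}(\vec p)$, following the four clauses of its grammar, and the whole argument parallels the proof of Proposition \ref{Prop:Sahl_2} almost verbatim — the only clause needing a genuinely fresh argument is the $\forall\vec q(\mathsf{Sahl}_{n-1}(\vec q)\to\mathsf{PIA}(\vec q,\vec p))$ clause.

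For the three routine clauses I would reuse the earlier arguments. If $\mathsf{Sahl}_{n}(\vec p)=\mathsf{Sahl}_{n-1}(\vec p)$, the claim is immediate from the outer induction hypothesis. If $\mathsf{Sahl}_{n}(\vec p)=\mathsf{Sahl}^{a}_{n}(\vec p)\land\mathsf{Sahl}^{b}_{n}(\vec p)$, I would apply the splitting rule for nominals, invoke the inner induction hypothesis on each conjunct, and recombine via (Scope-$\&$) together with the commutativity and associativity rules for $\&$, exactly as in the corresponding case of Proposition \ref{Prop:Sahl_1}. If $\mathsf{Sahl}_{n}(\vec p)=\Diamond\mathsf{Sahl}^{a}_{n}(\vec p)$, I would apply (Approx-Nom) to obtain $\exists\nomk(\nomk\leq\mathsf{Sahl}^{a}_{n}(\vec p)\ \&\ \nomi\leq\Diamond\nomk)$, apply the inner induction hypothesis to $\nomk\leq\mathsf{Sahl}^{a}_{n}(\vec p)$, and then absorb $\nomi\leq\Diamond\nomk$ into the $\mathsf{NOM}$-block using (Scope-$\&$) and the structural rules; the freshness side condition on (Approx-Nom) guarantees no clash with the nominals produced by the induction hypothesis.

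The key clause, $\mathsf{Sahl}_{n}(\vec p)=\forall\vec q(\mathsf{Sahl}_{n-1}(\vec q)\to\mathsf{PIA}(\vec q,\vec p))$, would be handled by reproving Proposition \ref{Prop:Sahl_1_to_PIA} with $\mathsf{Sahl}_{n-1}$ in the place of $\mathsf{Sahl}_{1}$. Starting from $\nomi\leq\forall\vec q(\mathsf{Sahl}_{n-1}(\vec q)\to\mathsf{PIA}(\vec q,\vec p))$, I would apply (Quant-Nom) and (Sep-Nom) to reach $\forall\vec q(\nomi\leq\mathsf{Sahl}_{n-1}(\vec q)\ \Rightarrow\ \nomi\leq\mathsf{PIA}(\vec q,\vec p))$; then I would invoke the \emph{outer} induction hypothesis (instead of Proposition \ref{Prop:Sahl_1}) to rewrite the antecedent $\nomi\leq\mathsf{Sahl}_{n-1}(\vec q)$ as $\exists\vec\nomj(\mathsf{NEG}\ \&\ \mathsf{NOM}\ \&\ \mathsf{MinVal})$, and Proposition \ref{Prop:PIA} to rewrite the succedent $\nomi\leq\mathsf{PIA}(\vec q,\vec p)$ as $\mathsf{RelMinVal}(\vec q,\vec p)$. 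From this point the reasoning is identical to that of Proposition \ref{Prop:Sahl_1_to_PIA}: (Scope-$\Rightarrow$) and repeated (Ex-$\nomi p$) move all existential nominal quantifiers and all $\vec q$-quantifiers to the front; the Ackermann rule then eliminates each variable of $\vec q$ (its polarity preconditions hold because $\vec q$ occurs only negatively inside each inequality of $\mathsf{NEG}$, not at all in $\mathsf{NOM}$, and only negatively in the antecedent $\phi$ of each $\phi\leq p$ in $\mathsf{RelMinVal}$, while $\mathsf{MinVal}$ supplies the minimal valuation $\bigvee\psi$ with $\psi$ pure); and finally (Spl-Quant-$\nomi$) and the packing rule leave a meta-conjunction of inequalities $\psi'\leq p$ with $\psi'$ pure and $p$ in $\vec p$, i.e.\ a $\mathsf{MinVal}$-block, which fits into the required shape $\exists\vec\nomj(\mathsf{NEG}\ \&\ \mathsf{NOM}\ \&\ \mathsf{MinVal})$ (with empty $\mathsf{NEG}$ and $\mathsf{NOM}$).

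I expect no essential new difficulty, only bookkeeping in that last clause: one must check that applying the outer induction hypothesis to $\mathsf{Sahl}_{n-1}(\vec q)$ really returns the antecedent in precisely the normal form demanded by Proposition \ref{Prop:Sahl_1}'s statement — in particular that the $\mathsf{MinVal}$-inequalities it produces have \emph{pure} left-hand sides even when $\mathsf{Sahl}_{n-1}$ contains deeply nested subformulas $\forall\vec q'(\mathsf{Sahl}_{n-2}(\vec q')\to\mathsf{PIA}(\vec q',\vec q))$ whose own internal Ackermann applications fire first, so that the outer Ackermann step for $\vec q$ is legitimate. This is exactly what Proposition \ref{Prop:PIA} packages (every $\vec q$-occurrence in a $\mathsf{PIA}$-formula sits under $\Box$, under $\land$, or to the right of a $\lor$ with a $\mathsf{POS}(\vec q)$-disjunct, hence stays negative through the residuation steps), so once that point is recorded the remainder is a routine transcription of the $\Pi_1$ and $\Pi_2$ proofs.
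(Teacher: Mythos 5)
Your proposal is correct and follows essentially the same route as the paper: an outer induction on $n$ with base cases from Propositions \ref{Prop:Sahl_1} and \ref{Prop:Sahl_2}, re-running the argument of Proposition \ref{Prop:Sahl_1_to_PIA} with $\mathsf{Sahl}_{n-1}$ in place of $\mathsf{Sahl}_1$ for the key clause, and an inner induction mirroring Proposition \ref{Prop:Sahl_2} for the remaining clauses. Your version is in fact more explicit than the paper's (which only says ``by an argument similar to\ldots''), particularly in checking the polarity preconditions of the Ackermann rule.
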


\begin{proof}
We prove by induction on $n$. For $n=1,2$, they are already proved in Proposition \ref{Prop:Sahl_1} and \ref{Prop:Sahl_2}. Now we assume that for $n=k$ the property holds, then by an argument similar to Proposition \ref{Prop:Sahl_1_to_PIA}, we have that $\nomi\leq\forall\vec q(\mathsf{Sahl}_k(\vec q)\to\mathsf{PIA}(\vec q,\vec p))$ can be reduced to the following complex inequality:
$$\forall\vec \nomj(\mathsf{PURE}\ \Rightarrow\ \mathsf{MinVal}(\vec p))$$
where $\mathsf{PURE}$ and $\mathsf{MinVal}(\vec p)$ are as described in Proposition \ref{Prop:Sahl_1_to_PIA}. Then by an argument similar to Proposition \ref{Prop:Sahl_2}, $\nomi\leq\mathsf{Sahl}_{k+1}(\vec p)$ can be reduced to the following complex inequality:
$$\exists\vec \nomj(\mathsf{NEG}\ \&\ \mathsf{NOM}\ \&\ \mathsf{MinVal})$$
where $\exists\vec \nomj$, $\mathsf{NEG}$, $\mathsf{NOM}$, $\mathsf{MinVal}$ are described as in Proposition \ref{Prop:Sahl_1}, hence the property holds for $n=k+1$.
\end{proof}

\begin{theorem}\label{Thm:Success}
For any $\Pi_n$-Sahlqvist formula, the algorithm $\mathsf{ALBA}^{\mathsf{SOPML}}$ transforms it into a complex inequality which does not contain any occurrences of propositional variables or propositional quantifiers.
\end{theorem}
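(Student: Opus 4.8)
The plan is to trace the input $\Pi_n$-Sahlqvist formula $\forall\vec p(\mathsf{Sahl}_{n}(\vec p)\to\mathsf{POS}(\vec p))$ through the three stages of $\mathsf{ALBA}^{\mathsf{SOPML}}$ and verify that the output is pure. First I would apply the preprocessing rule and the first-approximation rule, which transform the input into
$$\forall\vec p\forall\nomi_0(\nomi_0\leq\mathsf{Sahl}_{n}(\vec p)\ \Rightarrow\ \nomi_0\leq\mathsf{POS}(\vec p)),$$
so that the work reduces to showing that the reduction stage eliminates all propositional variables and propositional quantifiers from this complex inequality. The key observation is that $\nomi_0\leq\mathsf{POS}(\vec p)$ is exactly an inequality of the ``$\nomj\leq\neg\neg\mathsf{POS}(\vec p)$'' shape (modulo the fact that $\mathsf{POS}$ is positive, so after the Ackermann step its $\vec p$-variables will be replaced by pure formulas); thus the only genuine task is to handle the antecedent $\nomi_0\leq\mathsf{Sahl}_{n}(\vec p)$, which is precisely what Proposition~\ref{Prop:Sahl_n} does.

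The main steps, in order: (1) apply the two Stage-1 rules to get the complex inequality displayed above; (2) invoke Proposition~\ref{Prop:Sahl_n} to rewrite $\nomi_0\leq\mathsf{Sahl}_{n}(\vec p)$ into $\exists\vec\nomj(\mathsf{NEG}\ \&\ \mathsf{NOM}\ \&\ \mathsf{MinVal})$, where $\mathsf{NEG}$ is a meta-conjunction of $\nomj\leq\neg\mathsf{POS}(\vec p)$ inequalities, $\mathsf{NOM}$ is pure, and $\mathsf{MinVal}$ is a meta-conjunction of $\psi\leq p$ inequalities with $\psi$ pure; (3) use the quantifier scope rule $(Scope$-$\Rightarrow)$ to pull the $\exists\vec\nomj$ out past the $\Rightarrow$ as $\forall\vec\nomj$, and the exchange rules to arrange all propositional quantifiers $\forall\vec p$ innermost, obtaining
$$\forall\vec\nomj\forall\vec p(\mathsf{NEG}\ \&\ \mathsf{NOM}\ \&\ \mathsf{MinVal}\ \Rightarrow\ \nomi_0\leq\mathsf{POS}(\vec p));$$
(4) apply the Ackermann rule once for each $p$ in $\vec p$: the hypotheses $\mathsf{MinVal}$ supply the minimal valuations $\bigvee\psi/p$, the conjuncts of $\mathsf{NEG}$ have $\vec p$ occurring negatively (since $\mathsf{POS}$ is positive, $\neg\mathsf{POS}$ is negative) and become pure after substitution, $\mathsf{NOM}$ is unaffected, and in the succedent $\nomi_0\leq\mathsf{POS}(\vec p)$ the variable $p$ occurs positively, so substituting the pure $\bigvee\psi$ makes it pure as well; (5) finally apply the splitting quantifier rules and the packing rule to absorb the residual $\forall\vec\nomj$ and deliver a single pure complex inequality, which Stage~3 then translates by the standard translation.

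I expect the main obstacle to be bookkeeping about \emph{polarities} and about \emph{which propositional variables the Ackermann rule is eligible to eliminate}. Concretely, one must check that after Stage~2's manipulations every $p\in\vec p$ occurs only positively in the succedent-side inequality and only negatively in the $\mathsf{NEG}$ part, and that the $\mathsf{MinVal}$ inequalities really do have the right orientation $\psi\leq p$ with $\psi$ pure and $p$-free — this is guaranteed by Propositions~\ref{Prop:Sahl_1}, \ref{Prop:PIA}, \ref{Prop:Sahl_1_to_PIA}, \ref{Prop:Sahl_2}, \ref{Prop:Sahl_n}, but one needs to be careful that the $\vec q$-variables introduced inside $\mathsf{PIA}$-subformulas have already been eliminated by nested applications of the Ackermann rule (as handled in Proposition~\ref{Prop:Sahl_1_to_PIA}) before the outermost Ackermann step, so that no spurious propositional variable survives. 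A secondary subtlety is ensuring the side conditions of $(Scope$-$\Rightarrow)$ and the exchange rules are met, i.e.\ that the nominals in $\vec\nomj$ and the variables in $\vec p$ do not clash — but these are arranged by the freshness requirement on nominals introduced via the approximation rule, so this is routine. Once all propositional quantifiers and variables are gone, purity of the output is immediate, and the theorem follows.
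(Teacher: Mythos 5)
Your proposal follows the paper's own proof essentially verbatim: Stage 1, Proposition~\ref{Prop:Sahl_n} applied to the antecedent, $(Scope$-$\Rightarrow)$ plus the exchange rules to move $\forall\vec p$ innermost, and then the Ackermann rule once per $p\in\vec p$, with exactly the polarity checks the paper relies on. The only divergence is your final step of applying the splitting-quantifier and packing rules to absorb the residual $\forall\vec\nomj$; the paper stops at $\forall\nomi_0\forall\vec\nomj(\mathsf{PURE}\Rightarrow\mathsf{PURE}')$, which already satisfies the theorem since nominal quantifiers are explicitly permitted in the output (and the packing rule's side condition could fail when the succedent contains one of the $\vec\nomj$), so that extra step is superfluous but harmless.
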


\begin{proof}
Given a $\Pi_n$-Sahlqvist formula $\forall\vec p(\mathsf{Sahl}_{n}(\vec p)\to\mathsf{POS}(\vec p))$, we first apply the rules in Stage 1 and get $$\forall\vec p\forall\nomi_0(\nomi_0\leq\mathsf{Sahl}_{n}(\vec p)\ \Rightarrow\ \nomi_0\leq\mathsf{POS}(\vec p)).$$ By Proposition \ref{Prop:Sahl_n}, we have 
$$\forall\vec p\forall\nomi_0(\exists\vec \nomj(\mathsf{NEG}\ \&\ \mathsf{NOM}\ \&\ \mathsf{MinVal})\Rightarrow\nomi_0\leq\mathsf{POS}(\vec p)).$$
Then by applying (Scope-$\Rightarrow$) and repeatedly applying (Ex-$\nomi p$), we have 
$$\forall\nomi_0\forall\vec \nomj\forall\vec p(\mathsf{NEG}\ \&\ \mathsf{NOM}\ \&\ \mathsf{MinVal}\Rightarrow\nomi_0\leq\mathsf{POS}(\vec p)).$$
Now we can apply the Ackermann rule repeatedly for each propositional variable $p$ in $\vec p$, then $\mathsf{NEG}$ receives the minimal valuation from $\mathsf{MinVal}$ and become a meta-conjunction of pure inequalities, $\mathsf{NOM}$ remains pure, $\mathsf{MinVal}$ disappears, and $\nomi_0\leq\mathsf{POS}(\vec p)$ receives the minimal valuation and becomes pure. Now what we have is the following shape:
$$\forall\nomi_0\forall\vec \nomj(\mathsf{PURE}\Rightarrow \mathsf{PURE}'),$$
where $\mathsf{PURE}$ is a meta-conjunction of pure inequalities, and $\mathsf{PURE}'$ is a pure inequality.
\end{proof}

\begin{corollary}
There is an algorithm such that for any $\Pi_n$-Sahlqvist formula $\phi$, it can be transformed into an equivalent first-order formula.
\end{corollary}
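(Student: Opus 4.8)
The plan is to read the corollary off directly from the success and soundness results already in hand; it is essentially a packaging statement. Given a $\Pi_n$-Sahlqvist formula $\phi = \forall\vec p(\mathsf{Sahl}_{n}(\vec p)\to\mathsf{POS}(\vec p))$, I would feed it to $\mathsf{ALBA}^{\mathsf{SOPML}}$. By Theorem \ref{Thm:Success} the algorithm runs successfully and returns a complex inequality $\mathsf{Comp}$ containing no propositional variables and no propositional quantifiers; that is, $\mathsf{Comp}$ is pure, its only non-Boolean, non-modal ingredients being nominals, nominal quantifiers, and the binary modality $\mathbf{l}$. The termination/effectiveness is guaranteed because each rule in Stages 1 and 2 is effective and Theorem \ref{Thm:Success} tells us the reduction halts.

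Next I would apply the standard translation $ST$ for complex inequalities (the definition preceding Proposition \ref{Prop:ST:ineq:quasi:mega}) to $\mathsf{Comp}$, setting $\mathsf{FO}(\phi) := ST(\mathsf{Comp})$. The key observation is that purity is exactly what is needed for $\mathsf{FO}(\phi)$ to be genuinely first-order: the translation clauses for $\forall p$ and $\exists p$, which alone introduce the second-order quantifiers $\forall P,\exists P$, are never triggered, whereas nominal quantifiers translate to first-order quantifiers $\forall i,\exists i$ over elements, the modalities $\Box,\Diamond,\blacksquare,\Diamondblack$ translate using first-order relational quantifiers, and $\mathbf{l}(\alpha,\beta)$ translates to $\forall y(ST_{y}(\alpha)\to ST_{y}(\beta))$ --- all first-order. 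Hence $\mathsf{FO}(\phi)$ lies in the first-order correspondence language.

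For equivalence, Theorem \ref{Thm:Soundness} gives, for every Kripke frame $\mathbb{F}$, that $\mathbb{F}\Vdash\phi$ iff $\mathbb{F}\models\mathsf{FO}(\phi)$, so $\mathsf{FO}(\phi)$ is a first-order frame correspondent of $\phi$, and the assignment $\phi\mapsto\mathsf{FO}(\phi)$ is algorithmic. I do not expect a real obstacle here; the only point worth stating carefully is the sense of ``equivalent'': it is equivalence of frame-validity (first-order frame correspondence), since complex inequalities and their standard translations are interpreted globally on frames rather than at a point. If one wished for a local reading one could additionally invoke Proposition \ref{Prop:ST:ineq:quasi:mega} and the pointwise correctness of $ST_x$, but for the frame-level statement of the corollary the above chain --- Theorem \ref{Thm:Success}, then $ST$, then Theorem \ref{Thm:Soundness} --- suffices.
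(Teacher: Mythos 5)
Your proposal is correct and is exactly the argument the paper intends (the corollary is left without an explicit proof precisely because it is this packaging of Theorem \ref{Thm:Success}, the standard translation of a pure complex inequality, and Theorem \ref{Thm:Soundness}). Your additional remark that purity is what prevents the clauses $ST(\forall p(\cdot))$ and $ST(\exists p(\cdot))$ from introducing second-order quantifiers, and that ``equivalent'' means equivalence of frame-validity, makes explicit the two points the paper leaves implicit.
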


\section{Examples, non-standard rules and canonicity}\label{Sec:Example}

\subsection{Examples}

We give three examples of $\Pi_2$-Sahlqvist formulas to show how the $\mathsf{ALBA}^{\mathsf{SOPML}}$ algorithm works:

\begin{example}
$\forall p(\Diamond\Box p\land \forall q(\Diamond\Box q\to\Box(\Box q\lor\Box p))\to \Box\Diamond\Box p)$\\
$\forall p\forall\nomi(\nomi\leq\Diamond\Box p\land \forall q(\Diamond\Box q\to\Box(\Box q\lor\Box p))\Rightarrow\nomi\leq\Box\Diamond\Box p)$\\
$\forall p\forall\nomi(\nomi\leq\Diamond\Box p\ \&\ \nomi\leq\forall q(\Diamond\Box q\to\Box(\Box q\lor\Box p))\Rightarrow\nomi\leq\Box\Diamond\Box p)$\\
$\forall p\forall\nomi(\nomi\leq\Diamond\Box p\ \&\ \forall q(\nomi\leq\Diamond\Box q\to\Box(\Box q\lor\Box p))\Rightarrow\nomi\leq\Box\Diamond\Box p)$\\
$\forall p\forall\nomi(\nomi\leq\Diamond\Box p\ \&\ \forall q(\nomi\leq\Diamond\Box q\Rightarrow\nomi\leq\Box(\Box q\lor\Box p))\Rightarrow\nomi\leq\Box\Diamond\Box p)$\\
$\forall p\forall\nomi(\nomi\leq\Diamond\Box p\ \&\ \forall q\forall\nomj(\nomi\leq\Diamond\nomj\ \&\ \nomj\leq\Box q\Rightarrow\nomi\leq\Box(\Box q\lor\Box p))\Rightarrow\nomi\leq\Box\Diamond\Box p)$\\
$\forall p\forall\nomi(\nomi\leq\Diamond\Box p\ \&\ \forall q\forall\nomj(\nomi\leq\Diamond\nomj\ \&\ \Diamondblack\nomj\leq q\Rightarrow\nomi\leq\Box(\Box q\lor\Box p))\Rightarrow\nomi\leq\Box\Diamond\Box p)$\\
$\forall p\forall\nomi(\nomi\leq\Diamond\Box p\ \&\ \forall\nomj(\nomi\leq\Diamond\nomj \Rightarrow\nomi\leq\Box(\Box \Diamondblack\nomj\lor\Box p))\Rightarrow\nomi\leq\Box\Diamond\Box p)$\\
$\forall p\forall\nomi(\nomi\leq\Diamond\Box p\ \&\ \forall\nomj(\nomi\leq\Diamond\nomj \Rightarrow\Diamondblack\nomi\leq\Box\Diamondblack\nomj\lor\Box p)\Rightarrow\nomi\leq\Box\Diamond\Box p)$\\
$\forall p\forall\nomi(\nomi\leq\Diamond\Box p\ \&\ \forall\nomj(\nomi\leq\Diamond\nomj \Rightarrow\Diamondblack\nomi\land\neg\Box\Diamondblack\nomj\leq\Box p)\Rightarrow\nomi\leq\Box\Diamond\Box p)$\\
$\forall p\forall\nomi(\nomi\leq\Diamond\Box p\ \&\ \forall\nomj(\nomi\leq\Diamond\nomj \Rightarrow\Diamondblack(\Diamondblack\nomi\land\neg\Box\Diamondblack\nomj)\leq p)\Rightarrow\nomi\leq\Box\Diamond\Box p)$\\
$\forall p\forall\nomi(\nomi\leq\Diamond\Box p\ \&\ \forall\nomj(\mathbf{l}(\nomi,\Diamond\nomj)\land\Diamondblack(\Diamondblack\nomi\land\neg\Box\Diamondblack\nomj)\leq p)\Rightarrow\nomi\leq\Box\Diamond\Box p)$\\
$\forall p\forall\nomi(\nomi\leq\Diamond\Box p\ \&\ \exists\nomj(\mathbf{l}(\nomi,\Diamond\nomj)\land\Diamondblack(\Diamondblack\nomi\land\neg\Box\Diamondblack\nomj))\leq p\Rightarrow\nomi\leq\Box\Diamond\Box p)$\\

now denote $\exists\nomj(\mathbf{l}(\nomi,\Diamond\nomj)\land\Diamondblack(\Diamondblack\nomi\land\neg\Box\Diamondblack\nomj))$ as $\phi$, then\\

$\forall p\forall\nomi(\nomi\leq\Diamond\Box p\ \&\ \phi\leq p\Rightarrow\nomi\leq\Box\Diamond\Box p)$\\
$\forall p\forall\nomi\forall\nomk(\nomi\leq\Diamond\nomk\ \&\ \nomk\leq\Box p\ \&\ \phi\leq p\Rightarrow\nomi\leq\Box\Diamond\Box p)$\\
$\forall p\forall\nomi\forall\nomk(\nomi\leq\Diamond\nomk\ \&\ \Diamondblack\nomk\leq p\ \&\ \phi\leq p\Rightarrow\nomi\leq\Box\Diamond\Box p)$\\
$\forall\nomi\forall\nomk(\nomi\leq\Diamond\nomk\Rightarrow\nomi\leq\Box\Diamond\Box (\Diamondblack\nomk\lor\phi))$\\

Then we can use standard translation to get its first-order correspondence.
\end{example}

\begin{example}\label{Example:irreflexivity}
$\forall q(\forall p(p\to\Diamond p\lor q)\to q)$\\
$\forall q\forall\nomi(\nomi\leq\forall p(p\to\Diamond p\lor q)\Rightarrow \nomi\leq q)$\\
$\forall q\forall\nomi(\forall p(\nomi\leq p\to\Diamond p\lor q)\Rightarrow \nomi\leq q)$\\
$\forall q\forall\nomi(\forall p(\nomi\leq p\Rightarrow \nomi\leq\Diamond p\lor q)\Rightarrow \nomi\leq q)$\\
$\forall q\forall\nomi(\nomi\leq\Diamond\nomi\lor q\Rightarrow \nomi\leq q)$\\
$\forall q\forall\nomi(\nomi\land\neg\Diamond\nomi\leq q\Rightarrow \nomi\leq q)$\\
$\forall\nomi(\nomi\leq\nomi\land\neg\Diamond\nomi)$\\
$\forall\nomi(\nomi\leq\neg\Diamond\nomi)$\\
$\forall x\neg Rxx$.

By \cite[Example 2.58]{BRV01}, the irreflexive property is not preserved under taking ultrafilter extensions, which means that the validity of $\forall q(\forall p(p\to\Diamond p\lor q)\to q)$ is not preserved under taking canonical extensions, which means that $\forall q(\forall p(p\to\Diamond p\lor q)\to q)$ is not canonical.
\end{example}

\begin{example}\label{Exa:Non-Sahlqvist}
The following example is not equivalent to any Sahlqvist formula in the basic modal language:\\
$\forall p(\Box p\land\forall q(q\to \Diamond\Diamond q\lor p)\to p)$\\
$\forall p\forall \nomi(\nomi\leq\Box p\land\forall q(q\to \Diamond\Diamond q\lor p)\Rightarrow \nomi\leq p)$\\
$\forall p\forall \nomi(\nomi\leq\Box p\ \&\ \nomi\leq\forall q(q\to \Diamond\Diamond q\lor p)\Rightarrow \nomi\leq p)$\\
$\forall p\forall \nomi(\Diamondblack\nomi\leq p\ \&\ \nomi\leq\forall q(q\to \Diamond\Diamond q\lor p)\Rightarrow \nomi\leq p)$\\
$\forall p\forall \nomi(\Diamondblack\nomi\leq p\ \&\ \forall q(\nomi\leq q\to \Diamond\Diamond q\lor p)\Rightarrow \nomi\leq p)$\\
$\forall p\forall \nomi(\Diamondblack\nomi\leq p\ \&\ \forall q(\nomi\leq q\Rightarrow\nomi\leq\Diamond\Diamond q\lor p)\Rightarrow \nomi\leq p)$\\
$\forall p\forall \nomi(\Diamondblack\nomi\leq p\ \&\ \nomi\leq\Diamond\Diamond\nomi\lor p\Rightarrow \nomi\leq p)$\\
$\forall p\forall \nomi(\Diamondblack\nomi\leq p\ \&\ \nomi\land\neg\Diamond\Diamond\nomi\leq p\Rightarrow \nomi\leq p)$\\
$\forall p\forall \nomi(\Diamondblack\nomi\lor(\nomi\land\neg\Diamond\Diamond\nomi)\leq p\Rightarrow \nomi\leq p)$\\
$\forall \nomi(\nomi\leq \Diamondblack\nomi\lor(\nomi\land\neg\Diamond\Diamond\nomi))$\\
$\forall \nomi(\nomi\leq \Diamondblack\nomi$ or $\nomi\leq\nomi\land\neg\Diamond\Diamond\nomi)$\\
$\forall \nomi(\nomi\leq \Diamondblack\nomi$ or $\nomi\leq\neg\Diamond\Diamond\nomi)$\\
$\forall \nomi(\nomi\leq\Diamond\Diamond\nomi\to\Diamondblack\nomi)$\\
$\forall x\forall y(Rxy\land Ryx\to Rxx)$\\

One can show that this property is not modally definable:\\

Consider $\mathbb{F}_{1}=(W_{1},R_{1})$ where $W_{1}$ is the set of all integers, $R_{1}=\{(x,x+1)\mid x\in W_{1}\}$, $\mathbb{F}_{2}=(W_{2}, R_{2})$ where $W_{2}=\{w_0,w_1\}$, $R_{2}=\{(w_0,w_1),(w_1,w_0)\}$, then $\mathbb{F}_{2}$ is a bounded morphic image of $\mathbb{F}_{1}$, $\mathbb{F}_{1}\vDash\forall x\forall y(Rxy\land Ryx\to Rxx)$, while $\mathbb{F}_{2}\nvDash\forall x\forall y(Rxy\land Ryx\to Rxx)$.
\end{example}

\subsection{$\Pi_2$-formulas and rules}

In this section we consider the following kinds of rules, each of which is the generalization of the former one:

\begin{itemize}
\item Gabbay's irreflexivity rule \cite{Ga81}:$$\vdash\neg(p\to\Diamond p)\to\phi\ \Rightarrow\ \vdash\phi$$
where $p$ does not occur in $\phi$.
\item Venema's non-$\xi$ rules \cite{Ve93}:$$\vdash\neg\xi(p_0,\ldots,p_n)\to\phi\ \Rightarrow\ \vdash\phi$$
where $p_0, \ldots, p_n$ does not occur in $\phi$.
\item $\Pi_2$ rules \cite{BeGhLa20}:$$\vdash F(\vec \phi/\vec x, \vec p)\to\chi\ \Rightarrow\ \vdash G(\vec \phi/\vec x)\to\chi$$
where $F,G$ are formulas, $\vec\phi$ is a tuple of formulas, $\chi$ is a formula, and $\vec p$ is a
tuple of propositional variables which do not occur in $\vec \phi$ and $\chi$.
\end{itemize}

\paragraph{Gabbay's irreflexivity rule.}Now consider Gabbay's irreflexivity rule, its corresponding $\forall\exists$-statement is the following:

$$\forall q(\forall p(\neg(p\to\Diamond p)\leq q)\ \Rightarrow\ \top\leq q)$$

therefore, its equivalent $\mathsf{SOPML}$ $\forall\exists$-formula is

$$\forall q(\forall p\ \mathbf{l}(\neg(p\to\Diamond p),q)\to\mathbf{l}(\top,q))$$

now its $\mathsf{ALBA}^{\mathsf{SOPML}}$-reduction is as follows:\footnote{Notice that the algorithm here is slightly different from the one defined in the previous sections, due to the introduction of the $\mathsf{l}$ connective in the basic language. Similar for the non-$\xi$ rules and the $\Pi_2$ rules.}\\

$\forall\nomi(\nomi\leq\forall q(\forall p\ \mathbf{l}(\neg(p\to\Diamond p),q)\to\mathbf{l}(\top,q)))$\\
$\forall\nomi\forall q((\forall p(\nomi\leq\mathbf{l}(\neg(p\to\Diamond p),q))\ \Rightarrow\ \nomi\leq\mathbf{l}(\top,q)))$\\
$\forall q(\forall p(\neg(p\to\Diamond p)\leq q)\ \Rightarrow\ \top\leq q)$\\
$\forall q(\forall p\forall\nomj(\nomj\leq\neg(p\to\Diamond p)\ \Rightarrow\ \nomj\leq q)\ \Rightarrow\ \top\leq q)$\\
$\forall q(\forall p\forall\nomj(\nomj\leq p\ \&\ \nomj\leq\neg\Diamond p\ \Rightarrow\ \nomj\leq q)\ \Rightarrow\ \top\leq q)$\\
$\forall q(\forall\nomj(\nomj\leq\neg\Diamond\nomj\ \Rightarrow\ \nomj\leq q)\ \Rightarrow\ \top\leq q)$\\
$\forall q(\forall\nomj(\mathbf{l}(\nomj,\neg\Diamond\nomj)\land\nomj\leq q)\ \Rightarrow\ \top\leq q)$\\
$\forall q(\exists\nomj(\mathbf{l}(\nomj,\neg\Diamond\nomj)\land\nomj)\leq q\ \Rightarrow\ \top\leq q)$\\
$\top\leq\exists\nomj(\mathbf{l}(\nomj,\neg\Diamond\nomj)\land\nomj)$\\
$\forall\nomi(\nomi\leq\exists\nomj(\mathbf{l}(\nomj,\neg\Diamond\nomj)\land\nomj))$\\
$\forall xST_{x}(\exists\nomj(\mathbf{l}(\nomj,\neg\Diamond\nomj)\land\nomj))$\\
$\forall x\exists jST_{x}((\mathbf{l}(\nomj,\neg\Diamond\nomj)\land\nomj))$\\
$\forall x\exists j(ST_{x}(\mathbf{l}(\nomj,\neg\Diamond\nomj))\land ST_{x}(\nomj))$\\
$\forall x\exists j(\neg Rjj\land x=j)$\\
$\forall x\neg Rxx$.

\paragraph{Venema's non-$\xi$ rules.}Now consider Venema's non-$\xi$ rules, their corresponding $\forall\exists$-statement is the following:

$$\forall q(\forall\vec p(\neg\xi(\vec p)\leq q)\ \Rightarrow\ \top\leq q)$$

When $\xi$ is a Sahlqvist formula $\mathsf{Sahl}_1(\vec p) \to \mathsf{POS}(\vec p)$ in the basic modal language, Venema's rules can be equivalently written in the following $\mathsf{SOPML}$ $\forall\exists$-formula:

$$\forall q(\forall\vec p(\mathbf{l}(\neg(\mathsf{Sahl}_1(\vec p) \to \mathsf{POS}(\vec p)), q))\to\mathbf{l}(\top, q)).$$

Assume that $\nomi\leq \mathsf{Sahl}_1(\vec p) \to \mathsf{POS}(\vec p)$ can be reduced to $\nomi\leq\mathsf{Local}$ where $\mathsf{Local}$ is pure (which is the modal counterpart of the local frame correspondent of $\mathsf{Sahl}_1(\vec p) \to \mathsf{POS}(\vec p)$), then the $\mathsf{ALBA}^{\mathsf{SOPML}}$-reduction is as follows:\\

$\forall q(\forall\vec p(\mathbf{l}(\neg(\mathsf{Sahl}_1(\vec p) \to \mathsf{POS}(\vec p)), q))\to\mathbf{l}(\top, q))$\\
$\forall\nomi(\nomi\leq\forall q(\forall\vec p(\mathbf{l}(\neg(\mathsf{Sahl}_1(\vec p) \to \mathsf{POS}(\vec p)), q))\to\mathbf{l}(\top, q)))$\\
$\forall\nomi(\nomi\leq\forall q(\forall\vec p(\mathbf{l}(\neg(\mathsf{Sahl}_1(\vec p) \to \mathsf{POS}(\vec p)), q))\to\mathbf{l}(\top, q)))$\\
$\forall\nomi\forall q(\nomi\leq\forall\vec p(\mathbf{l}(\neg(\mathsf{Sahl}_1(\vec p) \to \mathsf{POS}(\vec p)), q))\ \Rightarrow\ \nomi\leq\mathbf{l}(\top, q))$\\
$\forall\nomi\forall q(\forall\vec p(\nomi\leq\mathbf{l}(\neg(\mathsf{Sahl}_1(\vec p) \to \mathsf{POS}(\vec p)), q))\ \Rightarrow\ \nomi\leq\mathbf{l}(\top, q))$\\
$\forall\nomi\forall q(\forall\vec p(\neg(\mathsf{Sahl}_1(\vec p) \to \mathsf{POS}(\vec p))\leq q)\ \Rightarrow\ \nomi\leq\mathbf{l}(\top, q))$\\
$\forall q(\forall\vec p(\neg(\mathsf{Sahl}_1(\vec p) \to \mathsf{POS}(\vec p))\leq q)\ \Rightarrow\ \top\leq q)$\\
$\forall q(\forall\vec p\forall\nomj(\nomj\leq\neg(\mathsf{Sahl}_1(\vec p)\to\mathsf{POS}(\vec p))\ \Rightarrow\ \nomj\leq q)\ \Rightarrow\ \top\leq q)$\\
$\forall q(\forall\vec p\forall\nomj(\nomj\nleq(\mathsf{Sahl}_1(\vec p)\to\mathsf{POS}(\vec p))\ \Rightarrow\ \nomj\leq q)\ \Rightarrow\ \top\leq q)$\\
$\forall q(\forall\nomj(\nomj\nleq\mathsf{Local}\ \Rightarrow\ \nomj\leq q)\ \Rightarrow\ \top\leq q)$\\
$\forall q(\forall\nomj(\neg\mathbf{l}(\nomj,\mathsf{Local})\land\nomj\leq q)\ \Rightarrow\ \top\leq q)$\\
$\forall q(\exists\nomj(\neg\mathbf{l}(\nomj,\mathsf{Local})\land\nomj)\leq q\ \Rightarrow\ \top\leq q)$\\
$\top\leq\exists\nomj(\neg\mathbf{l}(\nomj,\mathsf{Local})\land\nomj)$\\
$\forall\nomi(\nomi\leq\exists\nomj(\neg\mathbf{l}(\nomj,\mathsf{Local})\land\nomj))$\\
$\forall xST_x(\exists\nomj(\neg\mathbf{l}(\nomj,\mathsf{Local})\land\nomj))$\\
$\forall x\exists jST_x(\neg\mathbf{l}(\nomj,\mathsf{Local})\land\nomj)$\\
$\forall x\exists j(ST_x(\neg\mathbf{l}(\nomj,\mathsf{Local}))\land ST_x(\nomj))$\\
$\forall x\exists j(ST_x(\neg\mathbf{l}(\nomj,\mathsf{Local}))\land x=j)$\\
$\forall x\exists j(\neg ST_j(\mathsf{Local})\land x=j)$\\
$\forall x\neg ST_x(\mathsf{Local})$.

\paragraph{$\Pi_2$-rules.}We first consider the corresponding $\mathsf{SOPML}$-formulas of $\Pi_2$-rules. For $\vdash F(\vec \phi/\vec x, \vec p)\to\chi\ \Rightarrow\ \vdash G(\vec \phi/\vec x)\to\chi$, its corresponding $\forall\exists$-statement is the following:

$$\forall \vec x\forall z(G(\vec x)\nleq z\ \Rightarrow\ \exists\vec y(F(\vec x,\vec y)\nleq z)),$$

which is equivalent to 

$$\forall \vec x\forall z(\forall\vec y(F(\vec x,\vec y)\leq z)\ \Rightarrow\ G(\vec x)\leq z),$$

which is essentially the following $\mathsf{SOPML}$ $\forall\exists$-formula:

$$\forall \vec p\forall q(\forall\vec r(\mathbf{l}(F(\vec p,\vec r), q))\to\mathbf{l}(G(\vec p), q))$$

When $F(\vec p,\vec r)$ is of the form $\mathsf{Sahl}_1(\vec p,\vec r)$, $G(\vec p)$ is of the form $\mathsf{POS}(\vec p)$, the $\mathsf{ALBA}^{\mathsf{SOPML}}$-reduction is as follows:\\

$\forall \vec p\forall q(\forall\vec r(\mathbf{l}(\mathsf{Sahl}_1(\vec p,\vec r), q))\to\mathbf{l}(\mathsf{POS}(\vec p), q))$\\
$\forall\nomi(\nomi\leq\forall \vec p\forall q(\forall\vec r(\mathbf{l}(\mathsf{Sahl}_1(\vec p,\vec r), q))\to\mathbf{l}(\mathsf{POS}(\vec p), q)))$\\
$\forall\nomi\forall \vec p\forall q(\nomi\leq\forall\vec r(\mathbf{l}(\mathsf{Sahl}_1(\vec p,\vec r), q))\ \Rightarrow\ \nomi\leq\mathbf{l}(\mathsf{POS}(\vec p), q))$\\
$\forall\nomi\forall \vec p\forall q(\forall\vec r(\nomi\leq\mathbf{l}(\mathsf{Sahl}_1(\vec p,\vec r), q))\ \Rightarrow\ \nomi\leq\mathbf{l}(\mathsf{POS}(\vec p), q))$\\
$\forall \vec p\forall q(\forall\vec r(\mathsf{Sahl}_1(\vec p,\vec r)\leq q)\ \Rightarrow\ \mathsf{POS}(\vec p)\leq q)$\\
$\forall \vec p\forall q(\forall\vec r\forall\nomi(\nomi\leq\mathsf{Sahl}_1(\vec p,\vec r)\ \Rightarrow\ \nomi\leq q)\ \Rightarrow\ \mathsf{POS}(\vec p)\leq q)$\\
$\forall \vec p\forall q(\forall\vec r\forall\nomi(\exists\vec \nomj(\mathsf{NEG}(\vec p,\vec r)\ \&\ \mathsf{NOM}\ \&\ \mathsf{MinVal}(\vec p,\vec r))\ \Rightarrow\ \nomi\leq q)\ \Rightarrow\ \mathsf{POS}(\vec p)\leq q)$\\

(Here $\mathsf{NEG}(\vec p,\vec r)\ \&\ \mathsf{NOM}\ \&\ \mathsf{MinVal}(\vec p,\vec r)$ are as described in Proposition \ref{Prop:Sahl_1}.)\\

$\forall \vec p\forall q(\forall\vec r\forall\nomi\forall\vec\nomj(\mathsf{NEG}(\vec p,\vec r)\ \&\ \mathsf{NOM}\ \&\ \mathsf{MinVal}(\vec p,\vec r)\ \Rightarrow\ \nomi\leq q)\ \Rightarrow\ \mathsf{POS}(\vec p)\leq q)$\\
$\forall \vec p\forall q(\forall\vec r\forall\nomi\forall\vec\nomj(\mathsf{NEG}(\vec p,\vec r)\ \&\ \mathsf{NOM}\ \&\ \mathsf{MinVal}(\vec p,\vec r)\ \Rightarrow\ \nomi\leq q)\ \Rightarrow\ \forall\nomk(\nomk\leq\mathsf{POS}(\vec p)\ \Rightarrow\ \nomk\leq q))$\\
$\forall \vec p\forall q\forall \nomk(\forall\vec r\forall\nomi\forall\vec\nomj(\mathsf{NEG}(\vec p,\vec r)\ \&\ \mathsf{NOM}\ \&\ \mathsf{MinVal}(\vec p,\vec r)\ \Rightarrow\ \nomi\leq q)\ \&\ \nomk\leq\mathsf{POS}(\vec p)\ \Rightarrow\ \nomk\leq q)$\\
$\forall \vec p\forall q\forall \nomk(\forall\vec r\forall\nomi\forall\vec\nomj(\nomk\leq\mathsf{POS}(\vec p)\ \&\ (\mathsf{NEG}(\vec p,\vec r)\ \&\ \mathsf{NOM}\ \&\ \mathsf{MinVal}(\vec p,\vec r)\ \Rightarrow\ \nomi\leq q))\ \Rightarrow\ \nomk\leq q)$\\

Then we can apply the Ackermann rule and substitute the minimal valuation of $\vec p,\vec r$ into $\mathsf{POS}(\vec p)$ and $\mathsf{NEG}(\vec p,\vec r)$ and make the latter two pure, therefore the complex inequality is equivalent to\\

$\forall q\forall \nomk(\forall\nomi\forall\vec\nomj(\mathsf{PURE}\ \&\ (\mathsf{PURE}'\ \Rightarrow\ \nomi\leq q))\ \Rightarrow\ \nomk\leq q)$\\

By the packing rule, $\mathsf{PURE}'\ \Rightarrow\ \nomi\leq q$ is packed into an inequality $\psi\leq q$ where $\psi$ is pure:\\

$\forall q\forall \nomk(\forall\nomi\forall\vec\nomj(\mathsf{PURE}\ \&\ \psi\leq q)\ \Rightarrow\ \nomk\leq q)$\\
$\forall q\forall \nomk(\forall\nomi\forall\vec\nomj(\mathsf{PURE})\ \&\ \forall\nomi\forall\vec\nomj(\psi\leq q)\ \Rightarrow\ \nomk\leq q)$\\
$\forall q\forall \nomk(\forall\nomi\forall\vec\nomj(\mathsf{PURE})\ \&\ \exists\nomi\exists\vec\nomj\psi\leq q\ \Rightarrow\ \nomk\leq q)$\\
$\forall \nomk(\forall\nomi\forall\vec\nomj(\mathsf{PURE})\ \Rightarrow\ \nomk\leq \exists\nomi\exists\vec\nomj\psi)$.\\

Then we can perform the standard translation to obtain its corresponding first-order correspondent.

\section{Conclusion}\label{Sec:Conclusion}

In this paper, we develop the Sahlqvist correspondence theory for $\mathsf{SOPML}$. We define the class of Sahlqvist formulas for $\mathsf{SOMPL}$, each formula of which is shown to have a first-order correspondent by an algorithm $\mathsf{ALBA}^{\mathsf{SOMPL}}$. In addition, we show that certain $\Pi_2$-rules correspond to $\Pi_2$-Sahlqvist formulas in $\mathsf{SOMPL}$, which further correspond to first-order conditions.

Here we give some final remarks:

\begin{itemize}
\item Since the Sahlqvist correspondence theorem talks about frame definability, any propositional variables in the basic modal formulas are already implicitly treated as universally quantified, so what we will do in this paper for $\mathsf{SOPML}$ formulas is to find a Sahlqvist fragment which allows also for existentially quantified proposition variables, not only universally quantified variables. Indeed, this can be seen in the definition of $\Pi_2$-Sahlqvist formulas, where universal quantifiers are allowed in the antecedent part.

\item This paper can also be seen as looking for a modal counterpart of second-order quantifier elimination for monadic second-order logic ($\mathsf{MSO}$), as $\mathsf{SOPML}$ with global modality is expressively equivalent to $\mathsf{MSO}$ (see \cite{Ku08}). Here what we are aiming at is to find a natural fragment in a modal-type language which can be reduced to first-order formulas.
\end{itemize}

\paragraph{Acknowledgement} The research of the author is supported by Taishan University Starting Grant ``Studies on Algebraic Sahlqvist Theory'' and the Taishan Young Scholars Program of the Government of Shandong Province, China (No.tsqn201909151). The author would like to thank Nick Bezhanishvili for his suggestions and comments on this project, and Balder ten Cate for the detailed comments and remarks which help in improving the paper.

\bibliographystyle{abbrv}
\bibliography{Pi_2}
\end{document}